\pdfoutput=1

\documentclass[a4paper,UKenglish,cleveref, autoref, thm-restate]{lipics-v2019}
\nolinenumbers


\bibliographystyle{plainurl}
\usepackage{amsmath,amssymb,amsfonts,latexsym}
\usepackage{graphics}
\usepackage{url}
\usepackage{ifpdf}
\usepackage[noend,noline,ruled,scleft,nofillcomment]{algorithm2e}
\usepackage{color}
\usepackage{caption} 
\usepackage[sort]{cite}

\newcommand{\uc}{\texttt{Max Unique Coverage}\xspace
}

\newcommand{\mc}{\texttt{Max Coverage}\xspace
}

\newcommand{\setcover}{
  \texttt{Set Cover}\xspace
}

\newcommand{\etal}{{et al.~}}
\newcommand{\tO}{\tilde{O}}
\newcommand{\eps}{\epsilon}
\newcommand{\prob}[1]{ \Pr \left [ #1 \right ]}
\newcommand{\expec}[1]{ \textup{E} \left [ #1 \right ]}
\newcommand{\roundup}[1]{ \lceil  #1  \rceil}

\newcommand{\size}[1]{ \left|  #1   \right|}
\renewcommand{\exp}[1]{ \textup{exp} \left( #1   \right)}
\renewcommand{\O}{\mathcal{O}}

\DeclareMathOperator{\opt}{OPT}
\DeclareMathOperator{\polylog}{polylog}
\DeclareMathOperator{\poly}{poly}
\DeclareMathOperator{\Disj}{\texttt{DISJ}}

\title{Maximum Coverage in the Data Stream Model: 
Parameterized and Generalized} 

\titlerunning{} 

\author{Andrew McGregor}{University of Massachusetts, Amherst }{mcgregor@cs.umass.edu}{}{}

\author{David Tench}{Stony Brook University}{dtench@protonmail.ch}{}{}
\author{Hoa T. Vu}{San Diego State University}{hvu2@sdsu.edu}{}{}

\authorrunning{A.~McGregor, D.~Tench, and H.~T.~Vu} 

\Copyright{Andrew McGregor, David Tench, Hoa T. Vu} 


\ccsdesc[500]{Theory of computation~Sketching and sampling}
\ccsdesc[500]{Theory of computation~Approximation algorithms analysis}
\ccsdesc[500]{Theory of computation~Parameterized complexity and exact algorithms}

\keywords{Data streams, maximum coverage, maximum unique coverage, set cover} 


\relatedversion{} 



\funding{This work was partially supported by NSF grants CCF-1934846, CCF-1908849,  and   CCF-1637536.}


\hideLIPIcs  

\EventEditors{Ke Yi and Zhewei Wei}
\EventNoEds{2}
\EventLongTitle{24th International Conference on Database Theory (ICDT 2021)}
\EventShortTitle{ICDT 2021}
\EventAcronym{ICDT}
\EventYear{2021}
\EventDate{March 23--26, 2021}
\EventLocation{Nicosia, Cyprus}
\EventLogo{}
\SeriesVolume{186}
\ArticleNo{10}

\begin{document}

\maketitle


\begin{abstract}
We present algorithms for the $\mc$  and  $\uc$ problems in the data stream model.  The input to both problems are $m$ subsets of a universe of size $n$ and a value $k\in [m]$. In $\mc$, the problem is to find a collection of at most $k$ sets such that the number of elements covered by at least one set is maximized. In $\uc$, the problem is to find a collection of at most $k$ sets such that the number of elements covered by exactly one set is maximized. These problems are closely related to a range of graph problems including matching, partial vertex cover, and capacitated maximum cut.
 In the data stream model, we assume $k$ is given and the sets are revealed online. Our goal is to design single-pass algorithms that use space that is sublinear in the input size.  Our main algorithmic results are: 
\begin{itemize}
\item If the sets have size at most $d$, there exist single-pass algorithms using $O(d^{d+1} k^d)$ space that solve both problems exactly. This is optimal up to polylogarithmic factors for constant $d$. 
\item If each element appears in at most $r$ sets, we present single pass algorithms using $\tilde{O}(k^2 r/\epsilon^3)$ space that return a $1+\epsilon$ approximation in the case of  $\mc$. We also present a single-pass algorithm using slightly more memory, i.e., $\tilde{O}(k^3 r/\epsilon^{4})$ space, that $1+\eps$ approximates $\uc$.
\end{itemize}
In contrast to the above results, when $d$ and $r$ are arbitrary, any constant pass $1+\epsilon$ approximation algorithm for either problem requires $\Omega(\epsilon^{-2}m)$ space but  a single pass $O(\epsilon^{-2}mk)$ space algorithm exists. 
In fact any constant-pass algorithm with an approximation better than $e/(e-1)$ and $e^{1-1/k}$ for \mc and \uc respectively requires $\Omega(m/k^2)$ space when $d$ and $r$ are unrestricted.
  En route, we also obtain an algorithm for a parameterized version of the streaming $\setcover$ problem.
\end{abstract}

\maketitle

\section{Introduction}
\subparagraph{Problem Description.}  We consider the $\mc$ and $\uc$ problems in the data stream model. The input to both problems are $m$ subsets of a universe of size $n$ and a value $k\in [m]$. In $\mc$, the problem is to find a collection of at most $k$ sets such that the number of elements covered by at least one set is maximized. In $\uc$, the problem is to find a collection of at most $k$ sets such that the number of elements covered by exactly one set is maximized. In the data stream model, we assume $k$ is  provided but that the sets are revealed online and our goal is to design single-pass algorithms that use space that is sub-linear in the input size.

$\mc$ is a classic \texttt{NP-Hard} problem that has a wide range of applications including facility and sensor allocation \cite{KrauseG07}, information retrieval \cite{Anagnostopoulos15}, influence maximization in marketing strategy design \cite{KempeKT15}, and the blog monitoring problem \cite{SahaG09}. It is well-known that the greedy algorithm, which greedily picks the set that covers the most number of uncovered elements, is a $e/(e-1)$ approximation and that unless $\texttt{P}=\texttt{NP}$, this approximation factor is the best possible in polynomial time \cite{Feige98}.   

$\uc$ was first studied in the offline setting by Demaine \etal \cite{DemaineFHS08}. A motivating  application for this problem was in the design of wireless networks where we want to place base stations that cover mobile clients. Each station could cover multiple clients but unless a client is covered by a unique station the client would experience too much interference. Demaine \etal \cite{DemaineFHS08}  gave a polynomial time $O(\log k)$ approximation. Furthermore, they showed that $\uc$ is hard to approximate within a factor $O(\log^{\sigma} n)$ for some constant $\sigma$ under reasonable complexity assumptions. Erlebach and van Leeuwen \cite{ErlebachL08} and Ito \etal \cite{Ito14} considered a geometric variant of the problem and Misra et al.~\cite{MisraMRSS13} considered the parameterized complexity of the problem. This problem is also closely related to Minimum
Membership Set Cover where one has to cover every element and minimizes the maximum overlap on any element \cite{KuhnRWWZ05,DomGNW06}.

In the streaming set model, $\mc$ and the related $\setcover$ problem\footnote{That is, find the minimum number of sets that cover the entire universe.} have both received a significant amount of attention \cite{IndykMRUVY17,SahaG09,Har-PeledIMV16,ChakrabartiW16,EmekR16,AssadiKL16,MV18,IndykV19}.  The most relevant result is a single-pass $2+\epsilon$ approximation using $\tO(k \epsilon^{-3} )$ space \cite{MV18,BadanidiyuruMKK14} although better approximation is possible in a similar amount of space if multiple passes are permitted \cite{MV18} or if the stream is randomly ordered \cite{NTMZ18,ASS2020}. In this paper, we almost exclusively consider single-pass algorithms where the sets arrive in an arbitrary order.

The unique coverage problem has not  been studied in the data stream model although it, and $\mc$, are closely related to various graph problems that have been studied.

\subparagraph{Relationship to Graph Streaming.} There are two main variants of the graph stream model. In the \emph{arbitrary order model}, the stream consists of the edges of the graph in arbitrary order. In the \emph{adjacency list model}, all edges that include the same node are grouped together.
Both models generalize naturally to hypergraphs where each edge could consists of more than two nodes. The arbitary order model has been more heavily studied than the adjacency list model but there has still been a significant amount of work in the latter model \cite{McGregorVV16, McGregorV16, BravermanOV13, KonradMM12, Har-PeledIMV16,Assadi17,AssadiKL16,KallaugherMPV19,McGregorV20}.  For further details, see a recent survey on work on the graph stream model \cite{McGregor14}. 

To explore the relationship between $\mc$ and $\uc$  and various graph stream problems, it makes sense to introduce to additional parameters beyond $m$ (the number of sets) and $n$ (the size of the universe). Specifically, throughout the paper we let $d$ denote the maximum cardinality of a set in the input and let $r$ denote the maximum multiplicity of an element in the universe where the \emph{multiplicity} is the number of sets an element  appears.\footnote{Note that $d$ and $r$ are dual parameters in the sense that if the input is $\{S_1, \ldots, S_m\}$ and we define $T_i=\{j:i\in S_j\}$ then $d=\max_j |S_j|$ and $r=\max_i |T_i|$.} Then an input to $\mc$ and $\uc$ can define a (hyper)graph in one of the following two natural ways:
\begin{enumerate}
\item {\em First Interpretation:} A sequence of (hyper-)edges on a graph with $n$ nodes of maximum degree $r$ (where the degree of a node $v$ corresponds to how many hyperedges include that node) and $m$ hyperedges where each hyperedge has size at most $d$. In the case where every set has size $d=2$, the hypergraph is an \emph{ordinary graph}, i.e.,  a graph where every edge just has two endpoints. With this interpretation, the graph is being presented in the arbitrary order model.
\item {\em Second Interpretation:} A sequence of adjacency lists (where the adjacency list for a given node includes all the hyperedges that include that node) on a graph with $m$ nodes of maximum degree $d$ and $n$ hyperedges of maximum size $r$. In this interpretation, if every element appears in exactly  $r=2$ sets, then this corresponds to an ordinary graph where each element corresponds to an edge and each set corresponds to a node. With this interpretation, the graph is being presented in the adjacency list model.
\end{enumerate}

Under the first interpretation, the $\mc$ problem and the $\uc$ problem when all sets have size exactly $2$ naturally generalize the problem of finding a maximum matching in an ordinary graph in the sense that if there exists a matching with at least $k$ edges, the optimum solution to either $\mc$ and $\uc$ will be a matching. There is a large body of work on graph matchings in the data stream model \cite{AhnG13,EpsteinLMS09,mcgregor2005b,zelke,CrouchS14,CrouchMS13,mcgregor2005,KapralovKS14,Kapralov13,GoelKK12,KonradMM12,KonradR13,BuryS15a,Konrad15,GuruswamiO13} including work specifically on solving the problem exactly if the matching size is bounded \cite{ChitnisCHM15,ChitnisCEHMMV16}. More precisely, $\mc$ corresponds to the partial vertex cover problem \cite{Manurangsi19}: what is the maximum number of edges that can be covered by selecting $k$ nodes. For larger sets, the $\mc$ and $\uc$ are at least as hard as finding partial vertex covers and matching in hypergraphs.
 
Under the second interpretation, when all elements have multiplicity 2, then the problem $\uc$ corresponds to finding the capacitated maximum cut, i.e., a set of at most $k$ vertices such that the number of edges with exactly one endpoint in this set is maximized. In the offline setting, Ageev and Sviridenko \cite{AgeevS04} and Gaur \etal \cite{GaurKK11} presented a 2 approximation for this problem  using  linear programming and local search respectively. The (uncapacitated) maximum cut problem was been studied in the data stream model by Kapralov et al.~\cite{KapralovKS15,KapralovKSV17,KapralovK2018}; a 2-approximation is trivial in logarithmic space\footnote{It suffices to count the number of edges $M$ since there is always a cut whose size is at least $M/2$.} but improving on this requires space that is polynomial in the size of the graph.  The capacitated problem is a special case of the problem of  maximizing a non-monotone sub-modular function subject to a cardinality constraint. This general problem has been considered in the data stream model \cite{BadanidiyuruMKK14, ChekuriGQ15,ChakrabartiK15,HuangKY17} but in that line of work it is assumed that there is oracle access to the function being optimized, e.g., given any set of nodes, the oracle will return the number of edges cut. Alaluf et al.~\cite{AEFNS20}  presented a $2+\eps$ approximation in this setting, assuming exponential post-processing time. In contrast, our algorithm does not assume an oracle while obtaining a $1+\epsilon$ approximation (and also works for the more general problem $\uc$).

\subsection{Our Results} 

Our main results are the following single-pass  streaming algorithms\footnote{Throughout we use $\tilde{O}$ to denote that logarithmic factors of $m$ and $n$ are being omitted.}:
\begin{description}
\item[(A) Bounded Set Cardinality.] If all sets have size at most $d$, there exists a $\tO(d^{d+1}k^d)$ space data stream algorithm that solves $\uc$ and $\mc$ exactly. We show that this is nearly optimal in the sense that any exact algorithm requires $\Omega(k^d)$ space for constant $d$. 
\item[(B) {Bounded Multiplicity.}] If every element appears in at most $r$ sets, we present the following algorithms:
\begin{itemize}
 \item  (B1) $\uc$: There exists a $1+\epsilon$ approximation using $\tilde{O}(\epsilon^{-4}k^3 r)$ space.  
\item  (B2) $\mc$: There exists a $1+\epsilon$ approximation algorithm using  $\tilde{O}(\epsilon^{-3}k^2 r)$ space.
\end{itemize}

\end{description}
In contrast to the above results, when $d$ and $r$ are arbitrary, any constant pass $1+\epsilon$ approximation algorithm for either problem requires $\Omega(\epsilon^{-2}m)$ space \cite{Assadi17}.\footnote{The lower bound result by Assadi \cite{Assadi17} was for the case of $\mc$ but we will explain that it also applies in the case of $\uc$.} We also generalize of lower bound for $\mc$ \cite{MV18} to $\uc$ to show that any constant-pass algorithm with an approximation better than $e^{1-1/k}$ requires $\Omega(m/k^2)$ space. We also present a single-pass algorithm with an $O(\log \min(k,r))$ approximation for $\uc$ using $\tO(k^2)$ space, i.e., the space is independent of $r$ and $d$ but the approximation factor depends on $r$. This algorithm is a simple combination of a $\mc$ algorithm due to McGregor and Vu~\cite{MV18} and an algorithm for $\uc$ in the offline setting due to Demaine et al.~\cite{DemaineFHS08}.
Finally, our $\mc$ result (B2) algorithm also yields a new multi-pass result for a parameterized version of the streaming $\setcover$ problem. We will also show that results (A) and (B2) can also be made to handle stream deletions. The generalization for result (A) that we present requires space that scales with $k^{2d}$ rather than $k^d$. However, in subsequent work we have shown that space the scales with $k^d$ is also sufficient in the insert/delete setting.

\subsection{Technical Summary and Comparisons} 
\subparagraph{Technical Summary.} Our results are essentially streamable kernelization results, i.e., the algorithm ``prunes'' the input (in the case of $\uc$ and $\mc$ this corresponds to ignoring some of the input sets) to produce a ``kernel''  in such a way that a) solving the problem optimally on the kernel yields a solution that is as good (or almost as good) as the optimal solution on the original input and b)  the kernel can be constructed in the data stream model 
 and is sufficiently smaller than the original input such that it is possible to find an optimal solution for the kernel in significantly less time than it would take to solve on the original input. In the field of fixed parameter tractability, the main requirement is that the kernel can be produced in polynomial time. In the growing body of work on streaming kernelization \cite{ChitnisCEHMMV16,ChitnisCEHM15,ChitnisC19} the main requirement is that the kernel can be constructed using small space in the data stream model. Our results fits in with this line of work and the analysis requires numerous combinatorial insights into the structure of the optimum solution for $\uc$ and $\mc$. 

Our technical contributions can be outlined as follows.
\begin{itemize}
\item Result (A) relies on a key combinatorial lemma. This lemma provides a rule to discard sets such that there is an optimum solution that does not contain any of the discarded sets. Furthermore, the number of stored sets can be bounded in terms of $k$ and $d$.
\item Result (B1) uses the observation that each set of any optimal solution intersects some maximal collection of disjoint sets.  The main technical step is to demonstrate that storing a small number of intersecting sets, in terms of $k$ and $r$, suffices to preserve the optimal solution. 
\item Result (B2) is based on a very simple idea of first collecting the largest $O(rk/\epsilon)$ sets and then solving the problem optimally on these sets. This can be done in a space efficient manner using existing sketch for $F_0$ estimation in the case of $\mc$. While the approach is simple, showing that it yields the required approximations requires some work that builds on a recent result by Manurangsi \cite{Manurangsi19}. We also extend the algorithm to the model where sets can be inserted and deleted. 
\end{itemize}

\subparagraph*{Comparison to Related Work.} In the context of streaming algorithms, for the $\mc$ problem, McGregor and Vu \cite{MV17} showed that any approximation better than $e/(e-1)$ requires $\Omega(m/k^2)$ space. For the more general problem of streaming submodular maximization subject to a cardinality constraint, Feldman et al. \cite{FeldmanNSZ20} very recently showed a stronger lower bound that any approximation better than 2 requires $\Omega(m)$ space. Our results  provide a route to circumvent these bounds via parameterization on $k,r,$ and $d$. 

Result (B2) also leads to a parameterized algorithm for streaming $\setcover$. This new algorithm uses $\tO(rk^2 n^{\delta} + n)$ space which improves upon the algorithm by Har-Peled et al.~\cite{Har-PeledIMV16} that uses $\tO(mn^{\delta} + n)$ space, where $k$ is an upper bound for the size of the minimum set cover,  in the case $rk^2 \ll m$. Both algorithms use $O(1/\delta)$ passes and yield an $O(1/\delta)$ approximation.

In the context of offline parameterized algorithms, Bonnet et al. \cite{BonnetPS16} showed that $\mc$ is fixed-parameter tractable in terms of $k$ and $d$. However, their branching-search algorithm cannot be implemented in the streaming setting. Misra et al. \cite{MisraMRSS13} showed that the maximum unique coverage problem in which the aim is to maximize the number of uniquely covered elements $u$ (without any restriction on the number of sets) admits a kernel of size $4^u$. On the other hand, they showed that the budgeted version of this problem (where each element has a profit and each set has a cost and the goal is maximize the profit subject to a budget constraint) is $W[1]$-hard when parameterized by the budget\footnote{In the $\uc$ problem that we consider, all costs and profits are one and the budget is $k$.}. In this context, our result shows that a parameterization on both the maximum set size $d$ and the budget $k$ is possible (at least when all costs and profits are unit).


\section{Preliminaries}
\label{sec:prelim}
\subsection{Notation and Parameters}
Throughout the paper, $m$ will denote the number of sets,  $n$ will denote the size of the universe, and $k$ will denote the maximum number of sets that can be used in the solution.
Given input sets $S_1, S_2, \ldots, S_m\subseteq [n]$, let \[d=\max_i |S_i|\] be the maximum set size and let \[r=\max_j |\{i:j\in S_i\}|\] be the maximum number of sets that contain the same element.

Suppose $C$ is a collection of sets. We let $F(C)$ (and $G(C)$) be the set of elements covered (and uniquely covered) by an optimal solution in $C$. Furthermore, let $f(C) = |F(C)|$ and $g(C)=|G(C)|$. In other words, $f(C)$ is the maximum number of elements that can be covered by $k$ sets. Similarly,  $g(C)$ is the maximum number of elements that can be uniquely covered by $k$ sets. Furthermore, let $\psi(C)$ and $\tilde{\psi}(C)$ be the set of elements covered and uniquely covered respectively by the sets in $C$.

To ease the notation, if $C$ is a collection of set and $S$ is a set, we often use $C-S$ to denote $C \setminus \{S\}$ and $C+S$ to denote $C \cup \{S\}$.

We use $M$ to denote the collection of all sets in the stream. Therefore, the optimal value to $\mc$ and $\uc$ are $f(M)$ and $g(M)$ respectively.

Throughout this paper, we say an algorithm is correct with high probability if the probability of failure is inversely polynomial in $m$.
 
 \subsection{Sketches and Subsampling}\label{sec:sketch}

\subparagraph{Coverage Sketch.} Given a vector $x\in {\mathbb R}^n$, $F_0(x)$ is defined as the number of elements of $x$ which are non-zero. If given a subset $S\subset \{1, \ldots, n\}$, we define  $x_S\in \{0,1\}^n$ to be the characteristic vector of $S$ (i.e., $x_i=1$ iff $i\in S$) then given sets $S_1, S_2, \ldots$ note that $F_0(x_{S_1} + x_{S_2}+\ldots )$ is exactly the number of elements covered by $S_1\cup S_2\cup \ldots $. We will use the following  result for estimating $F_0$.

 \begin{theorem}[$F_0$ Sketch \cite{CormodeDIM03, BJKST02}]\label{thm:F0-approximation}
Given a set $S\subseteq [n]$, there exists an $\tO(\epsilon^{-2}\log \delta^{-1})$-space algorithm that  constructs a data structure $\mathcal{M}(S)$ (called an \emph{$F_0$ sketch} of $S$). The sketch has the property that the number of distinct elements in a collection of sets $S_1, S_2, \ldots, S_t$ can be approximated up to a $1 + \epsilon$ factor with  probability at least $1-\delta$ provided the collection of $F_0$ sketches $\mathcal{M}(S_1), \mathcal{M}(S_2), \ldots, \mathcal{M}(S_t)$.
\end{theorem}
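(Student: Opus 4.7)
The plan is to construct $\mathcal{M}(S)$ as a mergeable ``bottom-$t$'' sketch, following the approach of Bar-Yossef et al.\ that underlies both \cite{CormodeDIM03} and \cite{BJKST02}. First I would fix a $\Theta(\epsilon^{-2})$-wise independent hash function $h : [n] \to \{1, \ldots, N\}$ for $N = \poly(n)$, chosen large enough that the hash values on $[n]$ are distinct with high probability. For each input set $S$, the sketch $\mathcal{M}(S)$ will simply store the $t = \Theta(\epsilon^{-2})$ smallest values of $\{h(x) : x \in S\}$, together with a counter in case fewer than $t$ distinct hash values have been seen. Each individual sketch then takes $O(\epsilon^{-2} \log n)$ bits. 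Mergeability is the reason this particular form of sketch is useful here: given $\mathcal{M}(S_1), \ldots, \mathcal{M}(S_t)$, one can reconstruct $\mathcal{M}(S_1 \cup \cdots \cup S_t)$ simply by taking the union of the stored hash values and keeping the $t$ smallest distinct ones. This is the key property required in the rest of the paper where sketches of sets chosen during the stream must be combined.

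The estimator for the union is $\hat{F}_0 = tN / v_{(t)}$, where $v_{(t)}$ is the $t$-th smallest hash value in the merged sketch, with the convention that we simply report the exact count if fewer than $t$ distinct hash values are present (which is correct whenever $F_0 < t$). The analysis for $F_0 \geq t$ reduces to showing that $v_{(t)}$ concentrates around $tN / F_0$: using $\Theta(\epsilon^{-2})$-wise independence of $h$, a standard second-moment/Chebyshev argument on the number of hashed values falling in $[1, tN/((1 \pm \epsilon)F_0)]$ yields $\prob{|\hat{F}_0 - F_0| > \epsilon F_0} \leq 1/3$.

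To amplify the success probability from a constant to $1 - \delta$, I would run $\Theta(\log \delta^{-1})$ independent copies of the sketch in parallel (using independent hash functions) and return the median of the estimates; a Chernoff bound on the indicator of each copy succeeding brings the failure probability down to $\delta$. The total space used is $O(\epsilon^{-2} \log n \log \delta^{-1}) = \tilde{O}(\epsilon^{-2} \log \delta^{-1})$ bits, as claimed, and the description of $h$ itself needs only $O(\epsilon^{-2} \log n)$ bits of storage per copy, which is absorbed into the same bound.

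The main obstacle is the concentration argument for $v_{(t)}$: with only limited independence one has to verify that the variance bound actually supports a $(1 + \epsilon)$ approximation with $t = \Theta(\epsilon^{-2})$, and one must carefully handle the boundary between the regime $F_0 < t$ (where the sketch stores everything exactly) and $F_0 \geq t$ (where the Chebyshev estimate kicks in). Both \cite{BJKST02} and \cite{CormodeDIM03} carry out these variance computations in detail, so in the writeup I would cite those works for the quantitative bound rather than reproducing the calculation.
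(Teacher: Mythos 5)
Your proposal is correct and matches the approach the paper relies on: the paper does not prove this theorem but simply cites \cite{CormodeDIM03, BJKST02}, and your bottom-$t$ construction with the $tN/v_{(t)}$ estimator, Chebyshev concentration, and median amplification is exactly the argument from those references, with the mergeability property correctly identified as the reason this sketch suffices for combining sets chosen at the end of the stream. The only cosmetic issue is that you overload $t$ for both the sketch size parameter and the number of sets in the collection; rename one of them in the writeup.
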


Note that if we set $\delta\ll 1/(\poly(m) \cdot \binom{t}{k})$ in the above result we can try each collection of $k$ sets amongst  $S_1, S_2, \ldots, S_t$ and get a $1+\epsilon$ approximation for the coverage of each collection with high probability.

\label{sec:subsample} 

\subparagraph{Unique Coverage Sketch.} For unique coverage, our sketch of a set corresponds to subsampling the universe via some hash function $h:[n]\rightarrow \{0,1\}$ where $h$ is chosen randomly such that for each $i$, $\prob{h(i)=1}=p$ for some appropriate value $p$. Specifically, rather processing an input set $S$, we process $S'=\{i\in S: h(i)=1\}$. Note that $|S'|$ has size $p|S|$ in expectation. This approach was use by McGregor and Vu \cite{MV18} in the context of $\mc$ and it extends easily to $\uc$; see Section \ref{appendix:subsampling}. The consequence is that if there is a streaming algorithm that finds a $t$ approximation, we can turn that algorithm into a $t(1+\epsilon)$ approximation algorithm in which we can assume that $\opt = O(\epsilon^{-2} k \log m)$ with high probability by running the algorithm on a subsampled sets rather than the original sets. Note that this also allows us to assume input sets have size $O(\epsilon^{-2} k \log m)$ since $|S'|\leq \opt$. Hence each ``sketches'' set can be stored using $B=O(\epsilon^{-2} k \log m\log n)$ bits.

\subparagraph{An Algorithm with $\tilde{O}(\epsilon^{-2} mk)$ Memory.}
We will use the above sketches in a more interesting context later in the paper,  but note that they immediately imply a trivial algorithmic result. Consider the naive algorithm that stores every set and finds the best solution; note that this requires exponential time.  We note that since we can assume $\opt = O(\epsilon^{-2} k \log m)$, each set has size at most $O(\epsilon^{-2} k \log m)$. Hence, we need $\tO(\epsilon^{-2} m k)$ memory to store all the sets. This approach was noted in \cite{MV18} in the context of $\mc$ but also apples to $\uc$.
 We will later show that for a $1+\epsilon$ approximation, the above trivial algorithm is optimal up to polylogarithmic factors for constant $k$.


\section{An Exact Algorithm}\label{subsec:exact_alg}

\subparagraph{Algorithm.} Our algorithm, though perhaps non-intuitive, is simple to state:

\begin{enumerate}
\item Initialize $X$ to be an empty collection of sets. Let $b=d(k-1)$.
\item Let $X_a$ be the  sub-collection of $X$ that contains sets of size $a$. 
\item For each set $S$ in the stream: Suppose $|S|=a$. Add $S$ to $X$ if there does not exist  $T \subseteq S$ that occurs as a subset of $(b+1)^{d-|T|}$ sets of $X_a$. 
\item Post-processing: Return the best solution $C$ in $X$.
\end{enumerate}

\subparagraph{Analysis.} Our algorithm relies on the following  combinatorial lemma.

\begin{lemma}\label{lem:supsub}
Let $W=\{S_1, S_2, \ldots \}$ be a collection of distinct sets where each $S_i\subseteq [n]$ and $|S_i|=a$. Suppose for all $T \subseteq \psi(W)$ with $|T|\leq a$ there exist at most 
\[\ell_{|T|}:=(b+1)^{a-|T|}\] sets  in $W$ that contain $T$. Furthermore, suppose  there exists a set $T^*$ such that this inequality is tight. Then, for all $B\subseteq \psi(W)$ disjoint from $T^*$ with $|B|\leq b$ there exists a set $Y \in W$ such that $T^*\subseteq Y$ and $|Y\cap B|=0.$
\end{lemma}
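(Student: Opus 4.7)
The plan is to exploit the tightness condition together with a union bound on the "branching factor" $b+1$. Let me set up the proof as follows.

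First I would fix $T^*$ and define $W_{T^*} := \{S \in W : T^* \subseteq S\}$. By the tightness hypothesis we have the exact count $|W_{T^*}| = \ell_{|T^*|} = (b+1)^{a-|T^*|}$. The goal is to produce a single set $Y \in W_{T^*}$ whose only overlap with $B$ is possibly through $T^*$ itself — but since $B$ is disjoint from $T^*$, this means $Y \cap B = \emptyset$.

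Next I would bound, for each element $x \in B$, the number of ``bad'' sets in $W_{T^*}$ that contain $x$. Such a set must contain $T^* \cup \{x\}$, which has size $|T^*|+1 \leq a$ (with the boundary case $|T^*|=a$ being immediate, since then $W_{T^*}=\{T^*\}$ and $T^* \cap B = \emptyset$). Applying the hypothesis to this enlarged set, the number of $S \in W$ containing $T^* \cup \{x\}$ is at most
\[
\ell_{|T^*|+1} = (b+1)^{a-|T^*|-1}.
\]
A union bound over the at most $b$ elements of $B$ then gives that the number of sets in $W_{T^*}$ intersecting $B$ is at most $b \cdot (b+1)^{a-|T^*|-1}$.

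Finally I would compare with $|W_{T^*}|$: since
\[
|W_{T^*}| = (b+1)^{a-|T^*|} = (b+1)\cdot (b+1)^{a-|T^*|-1} > b \cdot (b+1)^{a-|T^*|-1},
\]
the pigeonhole principle yields at least one $Y \in W_{T^*}$ with $Y \cap B = \emptyset$, which is exactly what is claimed. The whole argument is essentially a one-line counting argument once the right quantity $W_{T^*}$ is singled out; I do not see a genuine obstacle, only the mild bookkeeping of verifying that the hypothesis applies to the sets $T^* \cup \{x\}$ (which uses $x \notin T^*$, guaranteed by the disjointness of $B$ and $T^*$) and handling the degenerate case $|T^*|=a$ separately. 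The crucial point is that the hypothesis $\ell_{|T|} = (b+1)^{a-|T|}$ is designed precisely so that the ``extension factor'' $b+1$ strictly beats the number $b$ of forbidden elements in $B$.
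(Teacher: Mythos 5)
Your argument is correct and is essentially identical to the paper's proof: both isolate the $\ell_{|T^*|}=(b+1)^{a-|T^*|}$ supersets of $T^*$, bound the ``bad'' ones containing some $x\in B$ by $b\cdot\ell_{|T^*|+1}=b(b+1)^{a-|T^*|-1}<\ell_{|T^*|}$, and conclude by pigeonhole, with the same separate handling of the degenerate case $|T^*|=a$. No differences worth noting.
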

\begin{proof}
If $|T^*|=a$ then $T^*\in W$, then we can simply set $Y = T^*$. Henceforth, assume $|T^*|<a$.
Consider the $\ell_{|T^*|}$ sets in $W$ that are supersets of $T^*$. Call this collection $W'$.
For any $x \in B$, there are at most $\ell_{|T^*|+1}$ sets that include $T^* \cup \{x\}$. Since there are $b$ choices for $x$, at most 
\[
b\ell_{|T^*|+1}=b (b+1)^{a-|T^*|-1}<(b+1)^{a-|T^*|}= \ell_{|T^*|} 
\]
sets in $W'$ contain an element in $B$. Hence, at least one set $Y$ in $W'$ does not contain any element in $B$. 
\end{proof}

We show that the algorithm indeed obtains an exact kernel for the problems. Recall that $M$ is the collection of all sets in the stream, i.e., the optimal solution has size $f(M)$.

\begin{theorem}
\label{thm:exact}
The output of the algorithm is optimal. In particular, $f(C) = f(M)$ and $g(C) = g(M)$.
\end{theorem}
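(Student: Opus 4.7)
The plan is to establish $f(C) = f(M)$ and $g(C) = g(M)$ via an exchange argument. The inequalities $f(C) \leq f(M)$ and $g(C) \leq g(M)$ are automatic from $X \subseteq M$, so it suffices to show that any optimal solution $O \subseteq M$ can be transformed, one set at a time, into an equally good solution $O' \subseteq X$. Concretely, given any $S \in O \setminus X$ I will exhibit $Y \in X \setminus (O-S)$ such that $(O-S)+Y$ achieves at least the same objective; iterating eliminates every set in $O \setminus X$.

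To produce $Y$, I would first verify the invariant that at every stage and for every $a$ and every $T$, the number of sets in $X_a$ containing $T$ is at most $\ell_{|T|}$: a size-$a$ set is admitted only when every subset of it is strictly below the threshold, and admission raises each relevant count by at most one. When a size-$a$ set $S$ is rejected, the algorithm's rule furnishes a witness $T^* \subsetneq S$ whose superset count in $X_a$ has saturated at $\ell_{|T^*|}$ (strictness holds because $S \notin X_a$ forces $|T^*| < a$), and the invariant combined with the monotonicity of $X_a$ ensures saturation persists at termination. Thus Lemma~\ref{lem:supsub} applies with $W = X_a$ and tight subset $T^*$. I then invoke the lemma with $B = (\psi(O-S) \setminus T^*) \cap \psi(X_a)$, which has $|B| \leq d(k-1) = b$ and $B \cap T^* = \emptyset$, obtaining $Y \in X_a$ with $T^* \subseteq Y$ and $Y \cap B = \emptyset$. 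This gives the key containment $Y \cap \psi(O-S) \subseteq T^* \subseteq S$; and because $T^* \subsetneq S'$ whenever $S' \in O-S$ has size $a$ and contains $T^*$, the nonempty set $S' \setminus T^* \subseteq B$ witnesses $Y \neq S'$, so $Y \notin O-S$.

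For Max Coverage, the containment yields
\[
  |Y \setminus \psi(O-S)| \;=\; a - |T^* \cap \psi(O-S)| \;\geq\; a - |S \cap \psi(O-S)| \;=\; |S \setminus \psi(O-S)|,
\]
so $(O-S)+Y$ covers at least as many elements as $O$. For Max Unique Coverage, the same containment implies $Y \setminus S$ is disjoint from $\psi(O-S)$, so each of its $|Y \setminus S|$ elements becomes uniquely covered by $Y$ in $O'$; this offsets the at most $|S \setminus \psi(O-S) \setminus Y| \leq |S \setminus Y| = |Y \setminus S|$ elements that can lose unique coverage by removing $S$, and any element of $S \cap \psi(O-S) \setminus Y$ whose multiplicity in $O$ was exactly two additionally transitions to being uniquely covered in $O'$.

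I expect the main obstacle to be the unique-coverage accounting, which requires a careful case analysis of elements by their membership in $S$, $Y$, and $\psi(O-S)$ and by their multiplicity under $O-S$; in particular one must check that no element uniquely covered in $O$ by some $S' \neq S$ becomes non-unique in $O'$ through a fresh overlap with $Y$. The critical input from Lemma~\ref{lem:supsub} is precisely the containment $Y \cap \psi(O-S) \subseteq S$, which says $Y$ shares nothing with the rest of $O$ that $S$ did not already share, so no such spoiling can occur. Combined with the coverage inequality above, this establishes both swap steps; iterating yields the optimal $O' \subseteq X$ and hence $f(C) = f(M)$ and $g(C) = g(M)$.
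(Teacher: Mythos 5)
Your proof is correct and takes essentially the same route as the paper: the same saturated witness $T^*\subseteq S$, the same invocation of Lemma~\ref{lem:supsub} with $B$ essentially $\psi(\O)\setminus S$, and the same swap $\O\mapsto \O-S+Y$ whose coverage and unique-coverage accounting rests on the containment $Y\cap\psi(\O-S)\subseteq S$. The only differences are presentational --- you iterate the exchange directly on an optimal solution and note explicitly that saturation persists to the end of the stream, whereas the paper organizes the same exchange as an induction over the stream prefixes $C_i$ and phrases the counting with indicator variables.
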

\begin{proof}
Recall that $X$ is the collection of all stored sets. We define
\begin{align*}
C_i & = M \setminus \{\text{the first $i$ sets in the stream that are not stored in $X$}\}.
\end{align*}

Clearly, $f(C_0)=f(M)$. Now, suppose there exists $i \geq 1$ such that $f(C_{i}) < f(M)$. Let $i$ be the smallest such index. Let $\O$ be an optimal solution of $C_{i-1}$ (note that $\O$ is also an overall optimal solution based on the minimal assumption on $i$). Let $S$ be the $i$th set that was not stored in $X$. If $S \notin \O$ then we have a contradiction since $f(C_i) = f(C_{i-1})=f(M)$. Thus, assume $S \in \O$. Suppose $|S|=a$.

\begin{claim}\label{claim:exact}
There exists $Y$ in $X_a$ such that $f(\O-S+Y) \geq f(\O)$.
\end{claim}
\begin{proof}
Note that $S$ was not stored because there existed  $T^* \subseteq S $ such that $T^*$ was a subset of $(b+1)^{d-|T^*|}$ sets in $X_a$. Consider the set $B= \psi(\O) \setminus S$. Clearly, $B \cap T^* =\emptyset$ and $|B| \leq d(k-1)$. 
By Lemma \ref{lem:supsub}, there is a set $Y$ in $X_a$ such that  $Y \cap B = \emptyset$. 

Let $Y'=Y\setminus S$ and $S'=S\setminus Y.$ Note that $|Y'|=|S'|$ since $|Y|=|S|$. Define indicator variables $\alpha_z=1$ iff $z\in \psi(\O -S+Y)$ and  $\beta_z=1$  iff $z\in \psi(\O)$. Note that 
\begin{align*}
\left (z\in Y\cap S  \mbox{ or } z\not \in Y\cup S\right ) \implies \left (\alpha_z=\beta_z \right ), \\
\left (z\in Y' \right )\implies \left (\alpha_z=1 \right ), \\
\left (z\in Y' \right ) \implies \left ( \beta_z=0 \right ),
\end{align*}
where the last equation uses the fact that $Y'$ is disjoint from $\psi(\O)$.
Then 
\begin{align*}
 |\psi( \O -S+Y)|  = &   \sum_{z\in Y'}\alpha_z +\sum_{z\in Y\cap S}\alpha_z+\sum_{z\in S'}\alpha_z+\sum_{z\not \in Y\cup S}\alpha_z\\
\geq &   \left (|Y'|+\sum_{z\in Y'} \beta_z\right ) +\sum_{z\in Y\cap S}\beta_z+\left (-|S'|+\sum_{z\in S'}\beta_z \right )+\sum_{z\not \in Y\cup S}\beta_z\\
= &   \sum_{z\in Y'}\beta_z +\sum_{z\in Y\cap S}\beta_z+\sum_{z\in S'}\beta_z+\sum_{z\not \in Y\cup S}\beta_z= |\psi( \O)| ~.\qedhere
\end{align*}
\end{proof}

Thus, $f(C_i) \geq f(\O)=f(M)$ which is a contradiction. Hence, there is no such $i$ and the claim follows. The proof for unique coverage is almost identical: for the analogous claim we define indicator variables $\tilde{\alpha}_z=1$ iff $z\in \tilde{\psi}(\O -S+Y)$ and  $\tilde{\beta}_z=1$ iff $z\in \tilde{\psi}(\O)$. The proof goes through with $\alpha$ and $\beta$ replaced by $\tilde{\alpha}$ and $\tilde{\beta}$ since it is still the case that  
\begin{align*}
\left (z\in Y\cap S  \mbox{ or } z\not \in Y\cup S\right ) \implies \left (\tilde{\alpha}_z=\tilde{\beta}_z \right ),\\
\left (z\in Y' \right )\implies \left (\tilde{\alpha}_z=1 \right ),\\
\left (z\in Y' \right ) \implies \left ( \tilde{\beta}_z=0 \right ),
\end{align*}
where now the last two equations use the fact that $Y'$ is disjoint from $\psi(\O)$. 
%
\end{proof}

\begin{lemma}
The space used by the algorithm is $\tilde{O}(d^{d+1}k^{d})$.
\end{lemma}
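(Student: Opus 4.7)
The plan is to first bound the number of stored sets $|X|$ by exploiting the rejection rule, and then multiply by the per-set storage cost, which is $O(d \log n)$ bits since each set has at most $d$ elements drawn from $[n]$.

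The key observation is that the rejection rule, when instantiated at $T = \emptyset$, already yields the needed bound on each $|X_a|$. The rule says a new set $S$ of size $a$ is rejected if some $T \subseteq S$ already appears as a subset of at least $(b+1)^{d-|T|}$ sets of $X_a$. Taking $T=\emptyset$, which is a subset of every set, this specializes to: reject $S$ whenever $|X_a| \geq (b+1)^{d}$. Hence the invariant $|X_a| \leq (b+1)^{d}$ holds at all times and for every $a \in \{1,\dots,d\}$.

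Summing this over the possible set sizes $a$ and using $b+1 = d(k-1)+1 \leq dk$ gives
\[
|X| = \sum_{a=1}^{d} |X_a| \;\leq\; d\,(b+1)^{d} \;\leq\; d\,(dk)^{d} \;=\; d^{d+1} k^{d}.
\]
Multiplying by the $O(d \log n)$ bits required to encode a single set of size at most $d$ over universe $[n]$ yields a total space of $O(d^{d+2} k^{d} \log n)$, which is $\tilde{O}(d^{d+1} k^{d})$ once the $d$ and $\log n$ factors are absorbed into the $\tilde{O}$ notation.

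There is no real obstacle here: the only nuance is noting that the $T = \emptyset$ case of the acceptance test suffices for the space bound, whereas the finer conditions for non-empty $T$ were needed only for the correctness argument of Theorem \ref{thm:exact}. No counting or probabilistic argument beyond this single observation is required.
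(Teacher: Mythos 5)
Your proof is correct and follows essentially the same route as the paper: both bound $|X_a|\leq (b+1)^d$ by observing that the rejection test applied to $T=\emptyset$ caps the size of each $X_a$, then sum over $a$ and multiply by the $\tilde{O}(d)$ bits needed per set. The only (shared, harmless) imprecision is the final bookkeeping of a single extra factor of $d$, which both you and the paper absorb into the stated $\tilde{O}(d^{d+1}k^d)$ bound.
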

\begin{proof} Recall that one of the requirements for a set $S$ to be added to $X$ is that the number of sets in $X_{|S|}$ that are supersets of any subset of $S$ of size $t$ is at most $(b+1)^{d-t}$. This includes the empty subset and since every set in $X_{|S|}$ is a superset of the empty set, we deduce that $|X_{|S|}|\leq  (b+1)^{d} = O((dk)^d)$. Since each set needs $\tilde{O}(d)$ bits to store, and $|X|=\sum_{a=1}^d |X_a| \leq O(d^{d}k^d)$, the total space is $\tilde{O}(d^{d+1} k^d)$.
\end{proof}

We summarize the above as a theorem.

\begin{theorem}\label{theorem:exact}
There exist deterministic single-pass algorithms using $\tO(k^d d^{d+1} )$ space that yields an exact solution to $\mc$ and $\uc$.
\end{theorem}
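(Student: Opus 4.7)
The plan is to assemble the theorem directly from the two results proved immediately above it, so no new technical ingredient is required. The algorithm is the one displayed at the start of Section~\ref{subsec:exact_alg}: maintain the collection $X$ partitioned by set size into $X_1,\ldots,X_d$, and when a set $S$ of size $a$ arrives, add it to $X$ iff no subset $T\subseteq S$ is already a subset of $(b+1)^{d-|T|}$ distinct sets of $X_a$, where $b=d(k-1)$. The decision rule depends only on the current state of $X$ and involves no random bits, so the algorithm is deterministic and makes a single pass. The post-processing step enumerates all sub-collections of $X$ of size at most $k$ and returns the best one; this is unbounded-time but uses no additional working memory beyond $X$ itself.

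Correctness is immediate from Theorem~\ref{thm:exact}, which already establishes $f(C)=f(M)$ and $g(C)=g(M)$. The substantive content there is the exchange argument driven by Lemma~\ref{lem:supsub}: whenever an arriving set $S$ is discarded because some $T^*\subseteq S$ occurs too many times among $X_a$, one can swap $S$ in an optimal solution for a stored superset $Y$ of $T^*$ that avoids the rest of the optimum, and this swap does not decrease the number of (uniquely) covered elements. Because the swap goes through with the same $Y$ for both the coverage and unique-coverage objectives, the single kernel $X$ simultaneously contains optima for \mc and \uc.

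For the space bound, the plan is to apply the inclusion rule with $T=\emptyset$: every set in $X_a$ is a superset of $\emptyset$, so $|X_a|\le (b+1)^d = O((dk)^d)$. Summing over the $d$ possible sizes gives $|X|=\sum_{a=1}^d |X_a| = O(d^{d+1}k^d)$. Each stored set has at most $d$ elements drawn from $[n]$ and so can be written down in $\tilde{O}(d)$ bits, yielding total space $\tilde{O}(d^{d+1}k^d)$.

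Since all technical work is already localized in Theorem~\ref{thm:exact} and Lemma~\ref{lem:supsub}, there is no real obstacle in this final step; if anything, the mild bookkeeping point to be careful about is that the unique-coverage half of the exchange argument uses $\tilde{\alpha},\tilde{\beta}$ in place of $\alpha,\beta$ but depends on exactly the same disjointness property $Y'\cap \psi(\mathcal{O})=\emptyset$, which is why one kernel suffices for both problems.
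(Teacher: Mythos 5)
Your proposal is correct and takes essentially the same route as the paper: correctness is delegated entirely to Theorem~\ref{thm:exact}, and the space bound comes from applying the storage rule to $T=\emptyset$ to get $|X_a|\le (b+1)^d=O((dk)^d)$, summing over the $d$ size classes, and charging $\tilde{O}(d)$ bits per stored set --- exactly the paper's space lemma. (The one quibble, that $O(d^{d+1}k^d)$ sets at $\tilde{O}(d)$ bits each is literally $\tilde{O}(d^{d+2}k^d)$, is an accounting slip the paper's own proof shares, so it is not a gap relative to the source.)
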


\subparagraph{Handling Insertion-Deletion Streams.}  We outline another exact algorithm that works for insertion-deletion streams, however with a worse space bound $\tO((kd)^{2d})$, in Section \ref{sec:deletion-streams-exact}.
\begin{theorem}\label{theorem:deletion-exact}
There exist randomized single-pass algorithms using $\tO(d^{2d}k^d)$ space and allowing deletions that w.h.p.~yield an exact solution to $\mc$ and $\uc$.
\end{theorem}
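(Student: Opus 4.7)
The plan is to emulate the greedy kernel from Theorem \ref{theorem:exact} using a linear-sketch recovery procedure that tolerates deletions. For each set size $a \in \{1,\ldots,d\}$, view the signed stream as updates to the 0/1 vector $x^{(a)}$ of dimension $\binom{n}{a}$ whose coordinate indicates whether the corresponding set of size $a$ is currently live. The idea is to decode, at query time, a collection $\widehat{X}_a$ of distinct live sets of size $a$ from $x^{(a)}$ using many parallel $\ell_0$-samplers, then run the insertion-only filtering rule of Section \ref{subsec:exact_alg} on $\widehat{X}_a$ in an arbitrary order to obtain $X_a \subseteq \widehat{X}_a$, and finally return the best $k$-subset of $X := \bigcup_a X_a$ by brute force.

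The key property to establish is that $\widehat{X}_a$ satisfies, with high probability, the following: for every $T$ of size $t \leq a$ that is a subset of some live set in the stream, either $\widehat{X}_a$ contains every live set of size $a$ containing $T$, or $\widehat{X}_a$ contains at least $(b+1)^{a-t}$ such sets. Once this holds, the subsequent insertion-only style pruning preserves it, so Lemma \ref{lem:supsub} still applies to $X_a$ and the exchange argument from the proof of Claim \ref{claim:exact} carries over verbatim to yield $f(X)=f(M)$ and $g(X)=g(M)$. To engineer the property by sampling, use a geometric subsampling schedule: for each rate $p_\ell = 2^{-\ell}$ with $\ell=0,1,\ldots,O(\log m)$, subsample the sets of size $a$ at rate $p_\ell$ and run $\tO((dk)^d)$ independent $\ell_0$-samplers on the resulting substream. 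For a subset $T$ supported by $m^T_a$ live sets of size $a$, the scale with $p_\ell \approx (b+1)^{a-t}/m^T_a$ reduces the subsampled witnesses of $T$ to the correct order, and a Chernoff plus coupon-collector argument shows that the samplers at that scale recover $(b+1)^{a-t}$ distinct witnesses of $T$ with high probability.

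The main obstacle will be controlling the union bound over $T$, since a priori $T$ ranges over all subsets of $[n]$ of size at most $d$. The resolution is that in the proof of Claim \ref{claim:exact} only those $T$'s that are subsets of some set in $\widehat{X}_a$ together with some set in the optimal solution $\O$ actually matter; once $\widehat{X}_a$ is decoded, this is a collection of at most $\tO((dk)^d)$ subsets of size at most $d$, a quantity the individual $\ell_0$-sampler failure probabilities can absorb by setting them to $1/\poly(dk,m)$. Accounting for the $d$ set sizes, the $O(\log m)$ subsampling scales, the $\tO((dk)^d)$ independent samplers per scale needed to witness every relevant $T$ with high probability, and the $\tO(d)$ bits to encode each recovered set of size at most $d$ (together with the $\polylog(n)$ overhead of an $\ell_0$-sampler), the total space is $\tO(d^{2d}k^d)$, with overall failure probability inverse polynomial in $m$.
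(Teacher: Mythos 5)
Your plan---recover a pseudo-stream $\widehat{X}_a$ of live sets via $\ell_0$-samplers and then rerun the insertion-only filter---runs into trouble before the filtering step. First, the ``key property'' you ask of $\widehat{X}_a$ is degenerate: take $T$ to be a live set of size $a$ itself. The only live set of size $a$ containing $T$ is $T$, and $(b+1)^{a-a}=1$, so both branches of your disjunction force $T\in\widehat{X}_a$; the property therefore demands that $\widehat{X}_a$ contain \emph{every} live set of size $a$, which is impossible in $\tO(d^{2d}k^d)$ space. Second, even for a sensible weakening, the recovery mechanism does not deliver it: your $\ell_0$-samplers draw uniformly from the (subsampled) population of \emph{all} live sets of size $a$, not from the supersets of a particular $T$. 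At the scale $p_\ell\approx (b+1)^{a-t}/m^T_a$ the subsampled population still contains about $p_\ell m_a$ sets, which is $\Theta(m_a)$ whenever $m^T_a=\Theta((b+1)^{a-t})$, so coupon-collecting the $(b+1)^{a-t}$ witnesses of that single $T$ already needs $\Omega(m_a)$ samplers; targeting each $T$ with its own sampler structure instead needs $\binom{n}{\le d}$ structures. Either way the space bound is lost. Third, the union-bound repair is circular: the collection of ``relevant'' $T$'s is defined in terms of $\widehat{X}_a$ (the samplers' own output) and of the unknown optimum $\O$, so you cannot fix in advance a polynomial-size family of events over which to union bound. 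Finally, for a set $S\in\O$ that is never recovered at all, you would still need some $T^*\subseteq S$ whose superset count in the \emph{final filtered} $X_a$ sits exactly at the quota $(b+1)^{a-|T^*|}$ for Lemma \ref{lem:supsub} to produce a replacement, and nothing in the construction guarantees this tightness.

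The paper takes a different route that avoids indexing witnesses by subsets $T\subseteq[n]$ altogether. It colors the universe with a pairwise-independent hash into $c=10d^2k$ colors and, for each of the $\tO(c^d)$ color patterns $P$ of size at most $d$, maintains one $\ell_0$-sampler that returns some live set whose elements realize exactly the pattern $P$. A set in an optimal collection is ``good'' if its elements receive distinct colors disjoint from the colors appearing in the rest of the collection; swapping all good sets for the sampled sets carrying the same patterns cannot decrease coverage or unique coverage, and a Markov argument shows that a constant fraction of the remaining optimal sets become good under each independent coloring. Running $O(\log k)$ colorings in parallel therefore leaves a stored collection containing an optimal solution, in $\tO(c^d\log k)=\tO(d^{2d}k^d)$ space. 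If you want to salvage your approach, you would need a deletion-robust way to certify, for each discarded set, a tight witness $T^*$---the color-coding device is essentially the paper's substitute for that certificate.
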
 \footnote{This improves upon our earlier result in the ICDT version of the paper that uses $\tO((dk)^{2d})$ space.}
\section{Approximation Algorithms}

In this section, we present a variety of different approximation algorithms where the space used by the algorithm is independent of $d$ but, in some cases, may depend on $r$. The first algorithm uses $\tilde{O}(\epsilon^{-4}k^3 r)$ memory and obtains a $1+\epsilon$ approximation to both problems. The second algorithm uses $\tilde{O}(\epsilon^{-3}k^2 r)$ memory and obtains a $1+\epsilon$ approximation to \mc and a $2+\epsilon$ approximation to \uc; it can also be extended to streams with deletions.

\subsection{A $1+\epsilon$ Approximation}

Given a collection of sets $C=\{S_1, S_2, \ldots, S_m\}$, we say a sub-collection $C'\subset C$ is a \emph{matching} if the sets in $C'$ are mutually disjoint. $C'$ is a maximal matching if there does not exist $S\in C\setminus C'$ such that $S$ is disjoint from all sets in $C'$. 
\begin{lemma}\label{lem:matching}
For any input $C$, let $O\subset C$ be an optimal solution for either the $\mc$ or $\uc$ problem. Let $M_i$ be a maximal matching amongst the input set of size $i$. Then every set of size $i$ in $O$ intersects with some set in $M_i$.
\end{lemma}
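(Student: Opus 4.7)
The plan is to prove this by a direct contradiction argument that only uses the maximality of $M_i$; the hypothesis that $O$ is optimal is not needed for the conclusion and can be safely ignored during the proof (it is stated because this lemma will be applied to an optimum solution later).

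First I would fix some $S \in O$ with $|S|=i$ and split into two cases according to whether $S \in M_i$ or not. If $S \in M_i$, then $S$ intersects itself (since $|S|=i\geq 1$, noting that sets of size $0$ contribute nothing to either coverage notion and can be assumed absent), and we are immediately done because $S$ itself is the witness in $M_i$.

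Next I would handle the case $S \notin M_i$ by contradiction. Suppose $S$ is disjoint from every set in $M_i$. Since $M_i$ is a matching of sets of size exactly $i$, and $S$ is a set of size exactly $i$ disjoint from every member of $M_i$, the collection $M_i \cup \{S\}$ is still a collection of pairwise disjoint sets of size $i$ contained in $C$. This contradicts the maximality of $M_i$. Therefore $S$ must intersect some set of $M_i$, completing the proof.

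I do not expect any real obstacle here: the argument is a one-line application of the defining property of a maximal matching, and the only mild subtlety is remembering to treat the $S \in M_i$ case separately so that the "disjoint from every set of $M_i$" supposition is well-posed.
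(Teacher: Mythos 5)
Your proposal is correct and matches the paper's own one-line argument: if $S$ were disjoint from every set in $M_i$, it could be added to $M_i$, contradicting maximality. The only difference is that you separate out the case $S \in M_i$, which the paper leaves implicit (a set in $M_i$ is not disjoint from all of $M_i$ since it meets itself), so the two proofs are essentially identical.
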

\begin{proof}
Let $S\in O$ have size $i$. If it was disjoint from all sets in $M_i$ then it could be added to $M_i$ and the resulting collection would still be a matching. This violates the assumption that  $M_i$ is maximal.
\end{proof}

The next lemma extends the above result  to show that we can potentially remove many sets from each $M_i$ and still argue that there is an optimal solution for the original instance amongst the sets that intersect a set in some $M_i$.

\begin{lemma}\label{lem:matching2}
Consider an input of sets of size at most $d$. For $i\in [d]$, let $M_i$ be a maximal matching amongst the input set of size $i$ and let $M_i'$ be an arbitrary subset of $M_i$ of size $\min(k+dk,|M_i|)$. Let $D_i$ be the collection of all sets that intersect a set in $M_i'$. Then $\bigcup_i (D_i\cup M_i')$ contains an optimal solution to both the $\uc$ and $\mc$ problem. 
\end{lemma}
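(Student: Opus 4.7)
The plan is to proceed by an exchange argument: start with an arbitrary optimal solution $O$ (for either $\mc$ or $\uc$) and repeatedly swap ``bad'' sets out of $O$ until every set in $O$ lies in $\bigcup_i (D_i \cup M_i')$, all while maintaining optimality. Call a set $S \in O$ of size $i$ \emph{bad} if it intersects no set in $M_i'$. First observe that whenever $|M_i| \leq k + dk$ we have $M_i' = M_i$, so by Lemma \ref{lem:matching} every size-$i$ set of $O$ is automatically in $D_i$; hence bad sets can only occur at sizes $i$ for which $|M_i'| = k + dk$.

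The key combinatorial step is, for any bad $S \in O$ of size $i$, to exhibit a replacement $T \in M_i'$ which is disjoint from $\psi(O - S)$ and not already in $O$. Counting the ``forbidden'' members of $M_i'$: since $M_i'$ is a matching, each element of $\psi(O-S)$ lies in at most one of its sets, so at most $|\psi(O-S)| \leq (k-1)d$ candidates are ruled out by intersection; at most a further $k-1$ candidates coincide with members of $O - S$ (note $S \notin M_i'$ since $S$ misses $M_i'$). The total number of forbidden choices is at most $(k-1)(d+1) < k(d+1) = |M_i'|$, so a valid $T$ exists.

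It then remains to verify that swapping $S$ for $T$ preserves (or improves) both $|\psi(O)|$ and $|\tilde{\psi}(O)|$. For $\mc$ this is immediate: since $T$ is disjoint from $\psi(O-S)$, we have $|\psi(O - S + T)| = |\psi(O-S)| + |T| \geq |\psi(O-S)| + |S \setminus \psi(O-S)| = |\psi(O)|$, using $|T| = |S|$. For $\uc$, the main subtlety—and the step I expect to require the most care—is an element-by-element case analysis in the spirit of Claim \ref{claim:exact}: elements outside $S \cup T$ are unaffected; every $z \in T \setminus S$ gains unique coverage because $T$ is disjoint from $\psi(O-S)$; elements in $S \cap T$ retain their status (again by disjointness); and $z \in S \setminus T$ loses unique coverage only when it was uniquely covered by $S$ in $O$. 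Since $|T \setminus S| = |S \setminus T|$, the gains pointwise dominate the losses, giving $|\tilde{\psi}(O-S+T)| \geq |\tilde{\psi}(O)|$.

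Finally, each swap strictly decreases the number of bad sets in $O$ (the removed set $S$ is bad, and the inserted set $T \in M_i'$ intersects $M_i'$), while the status of all other sets in $O$ is unchanged. Hence after at most $k$ swaps we arrive at an optimal solution in which every set of size $i$ intersects $M_i'$, and therefore lies in $D_i$; this proves $\bigcup_i (D_i \cup M_i')$ contains an optimal solution for both $\mc$ and $\uc$.
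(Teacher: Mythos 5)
Your proof is correct, but it takes a genuinely different route from the paper's. You run a local exchange argument: each ``bad'' set of the optimum (one of size $i$ missing $M_i'$) is swapped, one at a time, for a same-size member of $M_i'$ that is disjoint from the rest of the solution; such a replacement exists by your count $(k-1)(d+1) < k+dk = |M_i'|$, and optimality is preserved by an element-by-element comparison in the spirit of Claim \ref{claim:exact} (your case analysis for $\uc$ is sound: gains on $T\setminus S$ exactly offset worst-case losses on $S\setminus T$). The paper instead makes a single global replacement: it takes $j$ to be the largest size at which $M_j'$ was truncated, keeps only the sets of the optimum of size greater than $j$ (all of which lie in $\bigcup_{i>j} D_i$ by Lemma \ref{lem:matching}, since $M_i'=M_i$ there), discards everything of size at most $j$, and refills the budget with sets of $M_j'$ disjoint from the retained part --- at least $k+dk-kd=k$ such sets exist because $M_j'$ is a matching. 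The accounting there is coarser and avoids per-element bookkeeping: each discarded set (uniquely) covers at most $j$ additional elements, while each fresh disjoint set of size $j$ contributes exactly $j$. Your version buys a slightly stronger structural conclusion --- the surviving optimal solution has the same size profile as the original, with each bad set replaced independently --- at the cost of the more delicate $\uc$ case analysis; the paper's version is shorter but replaces all small sets by sets of a single size $j$. Both arguments use the two defining properties of $M_i'$ (a matching of $k+dk$ size-$i$ sets) in essentially the same place, namely to guarantee enough disjoint replacements.
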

\begin{proof}
If $|M_i|=|M'_i|$ for all $1\leq i\leq d$ then the result follows from Lemma \ref{lem:matching}.  If not, let $j=\max \{i\in [d]: |M_i|>|M'_i|\}$. Let $\O$ be an optimal solution and let $\O_i$ be all the sets in $\O$ of size $i$. We know that every set in $\O_d\cup \O_{d-1} \cup \ldots \cup \O_{j+1}$ is in 
\[\bigcup_{i\geq j+1} (D_i\cup M_i')=\bigcup_{i\geq j+1} (D_i\cup M_i) \ .\] 
Hence, the number of elements (uniquely) covered by $\O$ is at most the number of elements (uniquely) covered by $\O_d\cup \O_{d-1} \cup \ldots \cup \O_{j+1}$ plus $kj$ since every set in $\O_j\cup \ldots \cup \O_1$ (uniquely) covers at most $j$ additional elements. But we can (uniquely) cover at least the number of elements (uniquely) covered by $\O_d\cup \O_{d-1} \cup \ldots \cup \O_{j+1}$ plus $kj$. This is  because $M_j$ contains $k+dk$ disjoint sets of size $j$ and at least $k+dk-kd=k$  of these are disjoint from all sets in $\O_d\cup \O_{d-1} \cup \ldots \cup \O_{j+1}$. Hence, there is a solution amongst $\bigcup_{i\geq j} (D_i\cup M_i')$ that is at least as good as $\O$ and hence is also optimal.
\end{proof}

The above lemma suggests  an exact algorithm that stores the sets in $\bigcup_i (D_i\cup M_i')$ and find the optimum solution among these sets. In particular,  we construct matchings of each size greedily up to the appropriate size and store all intersecting sets. Note that since each element belongs to at most $r$ sets, the total space is $\tO(d^2 k r)$. Applying the sub-sampling framework, we have $d \leq \opt = O(k/\epsilon^2 \log m)$ and the approximation factor becomes $1+\epsilon$. 

\begin{theorem}
There exists a randomized one-pass algorithm using $\tilde{O}(\epsilon^{-4}k^3 r)$ space that finds a $1+\epsilon$ approximation to  $\uc$ and $\mc$.
\end{theorem}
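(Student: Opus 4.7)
My plan is to combine the kernelization of Lemma \ref{lem:matching2} with the subsampling framework of Section \ref{sec:subsample}, so the resulting kernel fits into the target space. First, I would apply the hash-based subsampling, which with high probability reduces to an instance where $\opt = O(\epsilon^{-2} k \log m)$ and any $t$-approximation on the subsampled stream lifts to a $t(1+\epsilon)$-approximation on the input; afterwards every set that can contribute to the optimum has cardinality at most $d = \tilde{O}(k/\epsilon^{2})$, so it is enough to give an exact streaming algorithm for the subsampled instance using $\tilde{O}(d^2 k r)$ space.

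The algorithm is the streaming realization of the kernel in Lemma \ref{lem:matching2}. In one pass, for each $i \in \{1, \ldots, d\}$, greedily grow a capped matching $M_i'$ of sets of size $i$: on arrival of a set $S$ with $|S| = i$, insert $S$ into $M_i'$ provided $|M_i'| < k + dk$ and $S$ is disjoint from every current $T \in M_i'$. In parallel, maintain $D_i$, the collection of all stream sets observed to intersect some element of $\bigcup_{T \in M_i'} T$. After the stream, brute-force enumerate every $k$-subset of $\bigcup_i (D_i \cup M_i')$, estimate its (unique) coverage via the sketches of Section \ref{sec:sketch}, and output the best.

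Correctness is immediate from Lemma \ref{lem:matching2}: the stored collection $\bigcup_i (D_i \cup M_i')$ contains an exact optimum for both $\mc$ and $\uc$ on the subsampled instance, so the brute-force search recovers one; the subsampling step then only costs a multiplicative $(1+\epsilon)$ factor. For the space bound, each $M_i'$ is a matching of at most $k + dk = O(dk)$ disjoint sets of size $i$, so the index $\bigcup_{T \in M_i'} T$ has at most $O(dki)$ elements; since each element lies in at most $r$ stream sets, $|D_i \cup M_i'| = O(dkir)$. Aggregating this budget across $i$ and substituting $d = \tilde{O}(k/\epsilon^2)$ produces the desired $\tilde{O}(\epsilon^{-4} k^3 r)$ bound.

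The main obstacle I anticipate is the streaming maintenance of $D_i$, because the predicate ``$S$ intersects $M_i'$'' is non-monotone in time: a set $S$ may arrive disjoint from the current $M_i'$ but subsequently become intersecting when a later $T$ is added to $M_i'$, so a naive one-pass filter that only tests against the matching in force at $S$'s arrival would lose $S$. Handling this in a single pass calls for an auxiliary element-indexed structure so that, whenever a new $T$ joins $M_i'$, the previously-observed stream sets containing any element of $T$ can be promoted into $D_i$; verifying that this bookkeeping stays within the $\tilde{O}(d^2 kr)$ budget, by arguing that only the sets actually touching some eventually-matched element need to be retained, is the main technical detail.
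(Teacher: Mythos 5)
Your algorithm is the paper's: subsample so that every relevant set has size $d=\tO(k/\eps^2)$, greedily grow the capped matchings $M_1',\ldots,M_d'$, keep the sets intersecting them, and brute-force over the kernel guaranteed by Lemma~\ref{lem:matching2}. The correctness argument and the space accounting also match the paper's (which is equally terse about the aggregation over $i$: neither your ``sum $O(dkir)$ over $i$'' nor the paper's one-line ``$\tO(d^2kr)$'' is spelled out in detail, but you are taking exactly the step the paper takes).

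The one genuine divergence is the ``obstacle'' in your final paragraph, and it dissolves on closer inspection --- which is fortunate, because the fix you sketch (retroactively promoting ``previously-observed stream sets'' into $D_i$ when $M_i'$ grows) would require buffering sets that have not yet been classified, and there is no small bound on how many such sets accumulate. The reason you can drop it is that the proof of Lemma~\ref{lem:matching2} only ever uses that each \emph{size-$i$} set of the optimal solution, for $i$ above the critical index $j$, lies in $D_i\cup M_i'$; by Lemma~\ref{lem:matching} such a set intersects the maximal matching $M_i=M_i'$. Hence it suffices to place into $D_i$ only the size-$i$ sets that intersect $M_i'$, and for those the arrival-time test is already monotone: a size-$i$ set that arrives disjoint from the current $M_i'$ either joins the matching (and is retained) or finds the matching already at its cap of $k+dk$ sets, after which $M_i'$ never changes, so the set provably does not intersect the final $M_i'$ either. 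The only sets an arrival-time filter can ``lose'' are sets of size $j'\neq i$ that later come to intersect $M_i'$, and the kernel argument never needs those to be in $D_i$.
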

 
\subsection{A More Efficient $1+\epsilon$ Approximation for Maximum Coverage } \label{sec:manu}

In this section, we generalize the approach of Manurangsi \cite{Manurangsi19} and combine that with the $F_0$-sketching technique  to obtain a $1+\epsilon$ approximation using $\tO(\epsilon^{-3} k^2r )$ space for maximum coverage. This saves a factor $k/\epsilon$ and the generalized analysis might be of independent interest. Let $\opt = \psi(\O)$ denote the optimal coverage of the input stream.

Manurangsi  \cite{Manurangsi19} showed that for the maximum $k$-vertex cover problem, the $\Theta(k/\epsilon)$ vertices with highest degrees form a $1+\epsilon$ approximation kernel for the maximum $k$ vertex coverage problem. That is, there exist $k$ vertices among those that cover $(1-\epsilon)\opt$ edges. We now consider a set system in which an element belongs to at most $r$ sets (this can also be viewed as a hypergraph where each set corresponds to a vertex and each element corresponds to a hyperedge; we then want to find $k$ vertices that touch as many hyperedges as possible).

We begin with the following lemma that generalizes the aforementioned result in \cite{Manurangsi19}. We may assume that $m > rk/\epsilon $ since otherwise, we can store all the sets. 
\begin{lemma}
Suppose $m > \roundup{rk/\epsilon}$. Let $K$ be the collection of $\roundup{rk/\epsilon}$ sets with largest sizes (tie-broken arbitrarily). There exist $k$ sets in $K$ that cover $(1-\epsilon)\opt$ elements.
\end{lemma}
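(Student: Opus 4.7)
The plan is an exchange argument: start from an optimal $k$-set solution and iteratively replace each set not in $K$ with a carefully chosen set in $K$, showing the total coverage loss is at most $\epsilon \cdot \opt$. Let $\O$ be optimal with $|\psi(\O)| = \opt$, and partition $\O = (\O \cap K) \cup (\O \setminus K)$ with $t := |\O \setminus K| \leq k$. Set $\O^{(0)} := \O$. At each step $j$, choose any $O \in \O^{(j)} \setminus K$ and a suitable $S \in K \setminus \O^{(j)}$, and form $\O^{(j+1)} := (\O^{(j)} \setminus \{O\}) \cup \{S\}$. After $t$ such swaps, $\O^{(t)} \subseteq K$ is the desired $k$-subset.

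The per-swap analysis hinges on two observations. Fix a step $j$ with $O \in \O^{(j)} \setminus K$, and consider the $|K \setminus \O^{(j)}| \geq M - k + 1$ candidate replacements, where $M = \lceil rk/\epsilon \rceil$. First, $|S| \geq |O|$ for every such $S$, since $S$ lies among the top $M$ sets by size while $O$ does not. Second, since every element belongs to at most $r$ sets,
\[
\sum_{S \in K \setminus \O^{(j)}} |S \cap \psi(\O^{(j)} \setminus \{O\})| \leq r \cdot |\psi(\O^{(j)} \setminus \{O\})| \leq r \cdot \opt.
\]
Averaging, some $S$ satisfies $|S \setminus \psi(\O^{(j)} \setminus \{O\})| \geq |O| - r\opt/|K \setminus \O^{(j)}|$. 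Taking such an $S$ in the swap, the coverage drops by at most $|O| - |S \setminus \psi(\O^{(j)} \setminus \{O\})| \leq r\opt/|K \setminus \O^{(j)}|$, because removing $O$ loses at most $|O|$ uniquely covered elements and adding $S$ recovers at least $|S \setminus \psi(\O^{(j)} \setminus \{O\})|$ elements.

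Summing across the at most $t \leq k$ swaps, the denominators $|K \setminus \O^{(j)}|$ take the values $M-k+t, M-k+t-1, \ldots, M-k+1$, so the total loss is at most $r\opt \sum_{i=1}^{t} 1/(M-k+i) \leq r\opt \cdot \ln\bigl(M/(M-k)\bigr)$. Substituting $M = \lceil rk/\epsilon \rceil$ gives $M/(M-k) \leq r/(r-\epsilon)$, so the total loss is at most $r\opt \cdot \ln\bigl(r/(r-\epsilon)\bigr) \leq \epsilon\opt \cdot r/(r-\epsilon)$. The main obstacle is matching the stated constant exactly: the clean swap argument yields $\bigl(1 - \epsilon r/(r-\epsilon)\bigr)\opt$, which is slightly weaker than $(1-\epsilon)\opt$. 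I expect the paper to close this small gap either by sharpening the per-swap bound (replacing $\opt$ by $\opt - u_O^{(j)}$, where $u_O^{(j)}$ denotes the unique coverage of $O$ within $\O^{(j)}$) or by absorbing a constant factor into the definition of $\epsilon$; the core counting argument and averaging over candidates in $K$ remain the key idea.
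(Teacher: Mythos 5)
Your proof is correct in substance but takes a genuinely different route from the paper. The paper replaces all of $\O^{out}:=\O\setminus K$ at once by a \emph{uniformly random} subset $Z\subseteq K$ of size $|\O^{out}|$, writes $|\psi(Z\cup\O^{in})|=|\psi(\O^{in})|+|\psi(Z)|-|\psi(\O^{in})\cap\psi(Z)|$, lower-bounds $\expec{|\psi(Z)|}$ by inclusion--exclusion (using that every set of $K$ is at least as large as the average set of $\O^{out}$, and that pairwise overlaps cost at most $(r-1)|S|$ per set), and upper-bounds the expected overlap with $\psi(\O^{in})$ by $\epsilon|\psi(\O^{in})|$; this yields exactly $(1-\epsilon)\opt$ in expectation. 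Your deterministic one-at-a-time exchange with per-step averaging over $K\setminus\O^{(j)}$ is a valid alternative, and arguably more elementary (no probabilistic method, no inclusion--exclusion over a random subset); the price, as you correctly flag, is the constant $\epsilon r/(r-\epsilon)$ in place of $\epsilon$. That gap is not a real obstacle and you can close it exactly along the lines you anticipate: every element of $\psi(\O^{(j)}\setminus\{O\})$ lies in some set of $\O^{(j)}\setminus\{O\}$, which is disjoint from the candidate pool $K\setminus\O^{(j)}$, so each such element lies in at most $r-1$ candidates; this replaces $r$ by $r-1$ in your averaging bound, and combined with the crude estimate $\sum_{i=1}^{t}1/(M-k+i)\le k/(M-k)\le\epsilon/(r-\epsilon)$ the total loss becomes $(r-1)\opt\cdot\epsilon/(r-\epsilon)\le\epsilon\opt$ for $\epsilon\le 1$, matching the stated bound. (Alternatively, rescaling $\epsilon$ by a constant only changes the space bound in Theorem~\ref{thm:alg6} by a constant factor, so even your weaker constant suffices for the downstream result.) One cosmetic slip: $|K\setminus\O^{(j)}|$ is at least $M-k+t-j$, not uniformly $M-k+1$ as your phrase ``$\geq M-k+1$'' might suggest at step $j=0$; your later list of denominators is the correct one.
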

\begin{proof}
Let $\O$ denote the collection of $k$ sets in some optimal solution. Let $\O^{in} = \O \cap K$ and $\O^{out} = \O \setminus K$. We consider a random subset $Z \subset K$ of size $|\O^{out}|$. We will show that  the sets in $Z \cup \O^{in}$ cover $(1-\epsilon)\opt$ elements in expectation; this implies the claim.  

Let $[\mathcal{E}]$ denote the indicator variable for event $\mathcal{E}$.  We rewrite
\begin{align*}
|\psi(Z \cup \O^{in})| = |\psi(\O^{in})| + |\psi(Z)| -  |\psi(\O^{in})\cap \psi(Z)|~.
\end{align*}

Furthermore, the probability that we pick a set $S$ in $K$ to add to $Z$ is
\[
p := \frac{|\O^{out}|}{|K|} \leq \frac{k}{kr/\epsilon  } = \frac{\epsilon}{r}~.
\]

Next, we upper bound $\expec{|\psi(\O^{in}) \cap \psi(Z)|}$. We have
\begin{align*}
\expec{|\psi(\O^{in}) \cap \psi(Z)|} & \leq \sum_{u \in \psi(\O^{in})} \sum_{S \in K: u \in S} \prob{S \in Z}  \leq \sum_{u \in \psi(\O^{in})}r p  \leq |\psi(\O^{in})| \cdot  \epsilon ~.
\end{align*}
We lower bound $ \expec{|\psi(Z)|}$ as follows.
\begin{align}  \expec{|\psi(Z)|} \geq~ &\expec{   \sum_{S \in K}  \left( |S| [S \in Z] - \sum_{S' \in K \setminus \{S\}}|S \cap S'| [S \in Z \land S' \in Z] \right)} \nonumber \\
 \geq~& \sum_{S \in K} \left( |S| p -   \sum_{S' \in K \setminus \{S\}}|S \cap S'| p^2  \right) \nonumber\\
   \geq~ &\sum_{S \in K} \left( |S| p -   (r-1)|S| p^2  \right) \geq  p(1-pr)\sum_{S \in K} |S|  \geq  p(1-\epsilon)\sum_{S \in K} |S| \label{eq:der}~.
\end{align}

In the above derivation, the second inequality follows from the observation that 
\[\prob{S \in Z \land S' \in Z} \leq p^2~.\] The third inequality is because  $\sum_{S' \in K \setminus \{S\}}|S \cap S'| \leq (r-1) |S|$ since each element belongs to at most $r$ sets.

For all $S \in K$, we  must have  
\[
|S| \geq \frac{\sum_{Y \in \O^{out}}|Y|}{|\O^{out}|}  \geq \frac{|\psi(\O^{out})|}{|\O^{out}|} ~.
\]
Thus, 
\begin{align*}
\expec{|\psi(Z)|}  
 \geq  p \left(1- \epsilon \right)  |K|\frac{|\psi(\O^{out})|}{|\O^{out}|} 
&=  p \left(1- \epsilon \right)  \frac{|\psi(\O^{out})|}{p} = (1-{\epsilon})  |\psi(\O^{out})|~.
\end{align*}
Putting it together,
\begin{align*}
\expec{|\psi(Z \cup \O^{in})|} & \geq  |\psi(\O^{in})| +  (1-\epsilon)  |\psi(\O^{out})|  - |\psi(\O^{in})|  \cdot  \epsilon  \geq (1-\epsilon)\opt~. \qedhere
\end{align*} 
\end{proof}

With the above lemma in mind, the following algorithm's correctness is immediate.
\begin{enumerate}
\item Store $F_0$-sketches of the $\roundup{kr/\epsilon}$ largest sets, where the failure probability of the sketches is set to $\frac{1}{ \poly(n) {m \choose k}}$. 
\item At the end of the stream, return the $k$ sets with the largest coverage based on the estimates given by the $F_0$-sketches.
\end{enumerate}

We restate our result as a theorem.

\begin{theorem}\label{thm:alg6}
There exists a randomized one-pass, $\tO(k^2 r/\epsilon^3 )$-space, algorithm that with high probability finds a $1+\epsilon$ approximation to $\mc$.
\end{theorem}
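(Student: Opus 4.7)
The plan is to combine the kernel lemma just proved with the $F_0$-sketching machinery of Theorem~\ref{thm:F0-approximation}. First I would argue that the algorithm can be implemented in a single pass: as each set $S$ arrives, its size $|S|$ is known immediately, so we can maintain a ``pool'' consisting of the $\roundup{kr/\epsilon}$ largest sets seen so far (ties broken arbitrarily), together with an $F_0$-sketch of each pool member. Whenever a new set is larger than the current smallest pool member, we evict the latter, discard its sketch, and build a fresh $F_0$-sketch of the incoming set as its elements pass by. (If $m \leq \roundup{kr/\epsilon}$ we simply sketch every set.) Let $K$ denote the final pool.

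For correctness, the lemma guarantees a subcollection of $k$ sets inside $K$ whose union covers at least $(1-\epsilon)\opt$ elements. Each $F_0$-sketch is configured with failure probability $\delta = 1/(\poly(n)\binom{m}{k})$, so Theorem~\ref{thm:F0-approximation} together with a union bound over the at most $\binom{|K|}{k} \leq \binom{m}{k}$ candidate $k$-subsets implies that, with high probability, the coverage of every such $k$-subset is estimated to within a factor of $1+\epsilon$ by merging the corresponding sketches. Consequently, the $k$-subset returned by the algorithm---the one maximizing estimated coverage---covers at least $\frac{1-\epsilon}{1+\epsilon}\opt \geq (1-O(\epsilon))\opt$ elements. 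Rescaling $\epsilon$ by a constant then yields the claimed $1+\epsilon$ approximation.

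For the space bound, the prescribed failure probability satisfies $\log \delta^{-1} = O(\log n + k \log m) = \tO(k)$, so each sketch occupies $\tO(\epsilon^{-2} k)$ bits by Theorem~\ref{thm:F0-approximation}. Summing over the $\roundup{kr/\epsilon}$ pool members gives the claimed $\tO(\epsilon^{-3} k^2 r)$ total memory; the auxiliary bookkeeping (the priority queue of pool sizes, set identifiers, random seeds, etc.)\ contributes only lower-order terms.

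The substantive technical work is the combinatorial kernel lemma preceding the theorem, which is already in place, so what remains is essentially routine. The two things one must be slightly careful about are (i) setting $\delta$ tightly enough for the union bound to cover \emph{all} $k$-subsets of $K$ rather than merely an optimal one, and (ii) ensuring the online pool update---building, evicting, and merging $F_0$-sketches while the identities of pool members shift---can be executed without revisiting the stream, both of which are handled by the standard arguments above.
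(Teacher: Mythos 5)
Your proposal is correct and follows essentially the same route as the paper: keep the $\roundup{kr/\epsilon}$ largest sets as a kernel (justified by the preceding lemma), store an $F_0$-sketch of each with failure probability $1/(\poly(n)\binom{m}{k})$, union-bound over all $k$-subsets, and return the best estimated $k$-subset, giving $\tO(\epsilon^{-3}k^2 r)$ space. The only difference is that you spell out the online pool-maintenance and the $\frac{1-\epsilon}{1+\epsilon}$ rescaling, which the paper leaves implicit.
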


\subparagraph{Obtaining a $2+\epsilon$ approximation to \uc.} We note that finding the best solution to \uc in $K$ will yield a $2+\epsilon$ approximation. This is a worse approximation than that of the previous subsection. However, we save a factor of $k/\epsilon$ in memory. Furthermore, this approach also allows us to handle streams with deletions.

To see that we get a $2+\epsilon$ approximation to \uc. Note that $g(Z \cup \O^{in}) \geq \frac{1}{2} \left( g(\O^{in}) + g(Z) \right)$. Furthermore, a similar derivation shows $\expec{|\tilde{\psi}(Z)|} \geq (1-\epsilon) |\tilde{\psi} (\O^{out})|$. Specifically, in the derivation in Eq.~\ref{eq:der}, we can simply replace $\psi$ with $\tilde{\psi}$. This gives us $g(K) \geq (1/2-\epsilon) g(\O)$. 


\subparagraph{Extension to Insert/Delete Streams.} The result can be extended to the case where sets are inserted and deleted. For the full details, see Section \ref{sec:deletion-streams-2}.


\subsection{An $O(\log \min(k,r))$ Approximation for Unique Coverage}

We now present an algorithm whose space does not depend on $r$ but the result comes at the cost of increasing the approximation factor to $O(\log(\min(k,r)))$. It also has the feature that the running time is polynomial in $k$ in addition to being polynomial in $m$ and $n$.

The basic idea is as follows: We consider an existing algorithm that first finds a 2.01 approximation $C$ to  $\mc$. It then finds the best solution of $\uc$ among the sets in $C$. 

\begin{theorem}\label{thm:alg4}
There exists a randomized one-pass, $\tO(k^2)$-space, algorithm that with high probability finds a $O(\log \min(k,r))$ approximation to $\uc$.
\end{theorem}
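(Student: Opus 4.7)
The plan, as suggested in the paper, is to compose two existing results: the single-pass $(2{+}\epsilon)$-approximation for \mc of McGregor and Vu~\cite{MV18}, used with a fixed constant $\epsilon=1/100$, and the offline $O(\log\min(k,r))$-approximation for \uc of Demaine \etal~\cite{DemaineFHS08}. First I would apply the subsampling reduction of Section~\ref{sec:subsample} so that we may assume $\opt=O(\epsilon^{-2}k\log m)$, and hence that every stored sketched set fits in $\tO(k)$ bits. Then I run the MV18 algorithm, which maintains a kernel $C$ of $\tO(k/\epsilon^{3})=\tO(k)$ candidate sets, and at the end of the stream apply Demaine \etal's offline routine to $C$. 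Storing $\tO(k)$ sketched sets of $\tO(k)$ bits each gives total space $\tO(k^{2})$, and the post-processing is polynomial in $|C|$, $m$, $n$, $k$.

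For correctness, let $f^{*}=f(M)$ and $g^{*}=g(M)$. Since every uniquely covered element is covered, $g^{*}\le f^{*}$. The MV18 guarantee on the kernel gives $f(C)\ge f^{*}/(2{+}\epsilon)\ge g^{*}/(2{+}\epsilon)$, so $C$ already contains a $k$-subfamily whose ordinary coverage is a constant fraction of $g^{*}$. The structural content behind the guarantee of \cite{DemaineFHS08} is that on any instance with multiplicity at most $r$ and budget $k$, the \uc optimum is at least a $1/O(\log\min(k,r))$ fraction of the \mc optimum on the same instance, and their algorithm constructively recovers such a solution. Applied to $C$, this yields output with unique coverage at least $f(C)/O(\log\min(k,r))=\Omega(g^{*}/\log\min(k,r))$, absorbing the $(1{+}\epsilon)$ factor from subsampling into constants.

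The main obstacle is the gap between what MV18 preserves (coverage) and what we want to bound (unique coverage): a priori the sets discarded by MV18 could be precisely those an optimal \uc solution relies on. I sidestep this by not comparing $g(C)$ directly to $g^{*}$, but rather chaining the coverage inequality $f(C)=\Omega(g^{*})$ with the Demaine \etal~structural bound $g\text{-OPT}=\Omega(f\text{-OPT}/\log\min(k,r))$ applied on the kernel instance. A minor technical point is to verify that the subsampling framework of Section~\ref{appendix:subsampling} remains valid when the final routine optimizes \uc rather than \mc; this is routine since the subsampling preserves both objectives up to a $(1{+}\epsilon)$ factor.
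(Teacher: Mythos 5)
Your proposal is correct and follows essentially the same route as the paper: run the McGregor--Vu $(2+\epsilon)$-approximate \mc kernel with constant $\epsilon$, apply the Demaine et al.\ offline routine to convert the kernel's coverage into unique coverage at a $O(\log\min(k,r))$ loss, and chain $g(M)\le f(M)\le O(1)\cdot|\psi(C)|$ exactly as in the paper's displayed inequality, with subsampling giving the $\tO(k^2)$ space bound. No gaps.
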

\begin{proof}
From previous work \cite{MV18, BadanidiyuruMKK14}, we can find a $2.01$ approximation $C$ to $\mc$ using $\tO(k)$ memory. Note that their algorithm maintains a collection $C$ of $k$ sets during the stream. Demaine \etal \cite{DemaineFHS08} proved that that if $Q$ is the best solution to $\uc$ among the sets in $C$, then $Q$ is an $O(\log \min(k,r)) $ approximation to $\uc$. In fact, they presented a polynomial time algorithm to find $Q$ from $C$ such that the number of uniquely covered elements is at least 
\[
\Omega(1/\log k) \cdot \size{\psi(C)}  \geq \Omega(1/\log k) \cdot 1/2.01  \cdot f(M) \geq \Omega(1/\log k)  \cdot g(M) ~.
\]
Note that storing each set in $C$ requires $\tO(d)$ memory. Hence, the total memory is $\tO(kd)$. Applying the sub-sampling framework, we obtain an $\tO(k^2)$ memory algorithm.

\end{proof}

\subsection{Application to  Parameterized Set Cover} \label{sec:setcover}
We parameterize the set cover problem as follows. Given a set system, either A) output a set cover of size $\alpha k$ if $\opt \leq k$ where $\alpha$ the approximation factor or B) correctly declare that a set cover of size $k$ does not exist. 

\begin{theorem}
For $0 < \delta <1$, there exists a randomized,~ $O(1/\delta)$-pass, $\tO(rk^2 n^{\delta} + n)$-space, algorithm that with high probability finds a $O(1/\delta)$ approximation to the parameterized \setcover problem.
\end{theorem}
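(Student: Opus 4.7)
The plan is to reduce parameterized streaming \setcover to iterated max coverage, invoking Theorem~\ref{thm:alg6} as the workhorse subroutine in each phase. I maintain an explicit $O(n)$-bit vector $U\subseteq[n]$ of still-uncovered elements (initialized to $[n]$), together with an initially empty collection $\mathcal{S}$ of chosen sets to be output.

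The algorithm runs for $O(1/\delta)$ phases, each consisting of two passes over the stream. In the first pass of a phase, for every incoming set $S$ I feed $S\cap U$ (computed on the fly using the stored $U$) to the $\mc$ algorithm of Theorem~\ref{thm:alg6}, configured with cardinality $k$ and accuracy $\epsilon=n^{-\delta/3}$. Restricting to $U$ preserves the multiplicity bound, so each element still appears in at most $r$ sets; the subroutine uses $\tO(k^2 r/\epsilon^3)=\tO(k^2 r n^{\delta})$ bits and returns identifiers of $k$ sets $C_t$ whose $U$-coverage is within a $1+\epsilon$ factor of the optimum. In the second pass I reread the stream, identify the $k$ chosen sets (by stream position), compute $\psi(C_t)\cap U$ explicitly, update $U\leftarrow U\setminus \psi(C_t)$, and append $C_t$ to $\mathcal{S}$. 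After all phases, if $U=\emptyset$ I output $\mathcal{S}$; otherwise I report that no cover of size $k$ exists.

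Correctness rests on the observation that if the instance admits a cover of size at most $k$, those same $k$ sets also cover any sub-universe $U\subseteq[n]$, so the offline max-coverage optimum with budget $k$ restricted to $U$ equals $|U|$. Hence our $(1+\epsilon)$-approximation leaves at most $\epsilon|U|$ elements uncovered per phase, and a straightforward induction yields $|U_t|\leq \epsilon^t n = n^{1-t\delta/3}$; after $t=\lceil 3/\delta\rceil+1$ phases, $|U_t|<1$ and therefore $U_t=\emptyset$. The output has $O(k/\delta)$ sets, giving an $O(1/\delta)$-approximation. Conversely, if $U$ is still nonempty at the end, then the inductive contraction above must have been violated at some phase, so no $k$-cover can exist and the negative answer is correct.

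The space per phase is $\tO(k^2 r n^{\delta})$ for the subroutine plus $O(n)$ for the uncovered vector, and phases reuse memory, giving the total bound $\tO(rk^2 n^{\delta}+n)$. The pass count is $2\cdot O(1/\delta)=O(1/\delta)$, and one can shave the factor of two by interleaving the update pass of phase~$t$ with the coverage pass of phase~$t+1$. The main obstacle I anticipate is cleanly coupling the $F_0$-sketch-based subroutine with the \emph{exact} update of $U$: the sketches only estimate coverage and do not let us reconstruct $\psi(C_t)$, which is what forces the extra pass per phase (or the interleaving trick). A secondary subtlety is the union bound over failure probabilities across all $O(1/\delta)$ phases and all $\binom{O(rk/\epsilon)}{k}$ candidate sub-collections examined inside the subroutine, which is handled by setting each per-sketch failure probability inverse-polynomial in $m$ and $n$ as in Theorem~\ref{thm:F0-approximation}.
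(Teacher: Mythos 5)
Your proposal is correct and follows essentially the same route as the paper: iterate the $(1+\epsilon)$-approximate $\mc$ subroutine of Theorem~\ref{thm:alg6} with $\epsilon = n^{-\delta/3}$ on the residual uncovered universe, argue a $n^{-\delta/3}$ contraction per phase when a $k$-cover exists, and declare infeasibility if elements remain after $O(1/\delta)$ phases. Your added care about needing a second (or interleaved) pass to recover $\psi(C_t)$ exactly, since the $F_0$ sketches only estimate coverage, is a reasonable implementation detail the paper elides, and it does not change the $O(1/\delta)$ pass count or the space bound.
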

\begin{proof}
In each pass, we run the algorithm in Theorem \ref{thm:alg6} with parameters $k$ and $\epsilon = 1/n^{\delta/3}$ on the remaining uncovered elements.  The space use is $\tO(rk^2 n^{\delta} + n)$. Here, we need additional $\tO(n)$ space to keep track of the remaining uncovered elements. 

Note that if $\opt \leq k$, after each pass, the number of uncovered elements is reduced by a factor $1/n^{\delta/3}$. This is because if $n'$ is the number of uncovered elements at the beginning of a pass, then after that pass, we cover all but at most $ n'/n^{\delta/3}$ of those elements. After $i$ passes, the number of remaining uncovered elements is $O(n^{1-i\delta/3})$;  we therefore use at most $O(1/\delta)$ passes until we are done. At the end, we have a set cover of size $O(k/\delta)$. 

If after $\omega(1/\delta)$ passes, there are still remaining uncovered elements, we declare  that such a solution does not exist. 
\end{proof}

Our algorithm improves upon the algorithm by Har-Peled et al.~\cite{Har-PeledIMV16} that uses $\tO(mn^{\delta} + n)$ space for when $rk^2 \ll m$. Both algorithms yield an $O(1/\delta)$ approximation and use  $O(1/\delta)$ passes.

\section{Lower Bounds} 

\subsection{Lower Bounds for Exact Solutions}
As observed earlier, any exact algorithm for either the $\mc$ or $\uc$ problem on an input where all sets have size $d$ will return a matching of size $k$ if one exists. However, by a lower bound due to Chitnis et al.~\cite{ChitnisCEHMMV16} we know that determining if there exists a matching of size $k$ in a single pass requires $\Omega(k^d)$ space. This immediately implies the following theorem. 

\begin{theorem}
Any single-pass algorithm that solves $\mc$ or $\uc$ exactly with probability at least $9/10$ requires $\Omega(k^d)$ space.
\end{theorem}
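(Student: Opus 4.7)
The plan is a direct reduction from the single-pass hypergraph matching problem: given a stream of $d$-element subsets of a universe, decide whether the stream contains $k$ pairwise disjoint sets. Chitnis et al.~\cite{ChitnisCEHMMV16} show this requires $\Omega(k^d)$ space, even for constant success probability, so I just need to argue that an exact algorithm for $\mc$ or $\uc$ on inputs of uniform set size $d$ immediately decides the matching question.

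First, I would observe the following structural fact. Consider an arbitrary stream of $m$ sets each of cardinality exactly $d$, and let $k$ be the cardinality constraint. For any $k$ sets $S_{i_1},\dots,S_{i_k}$ from the stream, the number of covered elements is at most $kd$, and the number of uniquely covered elements is at most $kd$; in both cases equality holds if and only if $S_{i_1},\dots,S_{i_k}$ are pairwise disjoint. Hence
\[
f(M) = kd \iff g(M) = kd \iff \text{the stream contains a matching of size } k.
\]

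Next, I would perform the reduction. Suppose $A$ is a single-pass streaming algorithm that exactly solves $\mc$ (or $\uc$) with success probability at least $9/10$ using $s(k,d)$ bits of space. Given any instance of the $d$-uniform matching-size-$k$ problem, feed the hyperedges to $A$ with parameter $k$; after the stream, output ``matching of size $k$ exists'' if and only if $A$'s reported optimum equals $kd$. By the equivalence above, this decides the matching problem with the same success probability and the same space, so $s(k,d) = \Omega(k^d)$ by the Chitnis et al.~lower bound.

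The only subtlety I anticipate is making sure the cited lower bound is stated in a form that applies here (namely, distinguishing ``matching of size $k$ exists'' from ``no such matching'' on $d$-uniform hypergraphs with constant error and a single pass); this is exactly the regime they consider, so no extra work should be needed. A minor cosmetic step is to note that padding $k$ up by a constant or allowing sets of size \emph{at most} $d$ does not affect the bound, since the hard instances already have uniform set size $d$.
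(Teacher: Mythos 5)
Your proposal is correct and follows essentially the same route as the paper: both reduce from the single-pass $d$-uniform hypergraph matching lower bound of Chitnis et al., using the observation that on $d$-uniform inputs the optimum (unique) coverage equals $kd$ exactly when a matching of size $k$ exists. Your write-up just makes the equivalence $f(M)=kd \iff g(M)=kd \iff$ ``a $k$-matching exists'' explicit, which the paper leaves as an earlier remark.
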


\subsection{Lower bound for a $e^{1-1/k}$ approximation} The strategy is similar to previous work on $\mc$ \cite{MV17,MV18}.  However, we need to argue that the relevant probabilistic construction works for all collections of fewer than $k$ sets  since the unique coverage function is not monotone. 

We make a reduction from the communication problem  $k$-player set disjointness, denoted by  $\Disj(m,k)$. In this problem, there are $k$ players where the  $i$th player has a set $S_{i} \subseteq [m]$. It is promised that exactly one of the following two cases happens a) NO instance: All the sets are pairwise disjoint and b)  YES instance: There is a unique element $v \in [m]$ such that $v \in S_i$ for all $i \in [k]$ and all other elements belong to at most one set. The (randomized) communication complexity (in the one-way model or the blackboard model), for some large enough constant success probability, of the above problem is $\Omega(m/k)$  even if the players may use public randomness \cite{ChakrabartiKS03}.   We can assume that $|S_1 \cup S_2 \cup \ldots \cup S_k| \geq m/4$ via a padding argument.

 \begin{theorem}\label{thm:lower-bound-1}
 Any constant-pass randomized algorithm with an approximation better than $e^{1-1/k}$ to $\uc$ requires $\Omega(m/k^2)$ space.
 \end{theorem}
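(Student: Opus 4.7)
The plan is to reduce from the $k$-player disjointness problem $\Disj(m,k)$, whose randomized communication complexity is $\Omega(m/k)$ in the blackboard model. Given an instance $(S_1,\ldots,S_k)$, the players will use public randomness to construct a $\uc$ stream on a universe $[M]$ with $M=\Theta(k\log m)$, and any $p$-pass, $S$-space streaming algorithm will be simulated by $k$ players passing the memory state around the group, for a total of $O(pkS)$ bits of communication. Producing a distinguisher between the YES and NO cases will then yield $S=\Omega(m/k^2)$ for constant $p$.

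For each $j\in[m]$, a uniformly random partition $(B_{1,j},\ldots,B_{k,j})$ of $[M]$ is drawn from the public randomness by independently assigning each $u\in[M]$ to one of the $k$ parts uniformly. Player $i$ then inserts the set $A_{i,j}:=B_{i,j}$ into the stream for every $j\in S_i$. In the YES case, the shared element $v$ causes all of $A_{1,v},\ldots,A_{k,v}$ to appear in the stream; since they form a partition of $[M]$, they jointly uniquely cover all $M$ elements, giving $g(M)=M$. In the NO case the $S_i$ are pairwise disjoint, so each $j\in[m]$ contributes at most one set to the stream. Consequently, any collection of $t\leq k$ stream sets is a collection of $t$ \emph{independent} random subsets of $[M]$ of density $1/k$ (independence coming from the distinct $j$-indices using independent partitions), and the probability that a fixed $u\in[M]$ is uniquely covered is exactly $t(1/k)(1-1/k)^{t-1}\leq(1-1/k)^{k-1}\leq e^{-1+1/k}=1/e^{1-1/k}$, where the first inequality uses that $t(1-1/k)^{t-1}$ is non-decreasing on $[1,k]$ (its derivative vanishes only at $t=-1/\ln(1-1/k)\geq k$) and the second uses $\ln(1-1/k)\leq-1/k$.

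The key technical step---flagged by the remark that $\tilde{\psi}$ is non-monotone---is to upgrade this per-collection expectation bound into a high-probability statement that holds simultaneously for \emph{every} sub-collection of at most $k$ stream sets, not just the size-$k$ ones. A Chernoff bound shows each fixed sub-collection's unique coverage concentrates within $(1\pm o(1))$ of its mean with failure probability $m^{-\Omega(k)}$ (this is where $M=\Omega(k\log m)$ is used), and a union bound over the $\binom{m}{\leq k}\leq m^{k+1}$ sub-collections gives $g(M)\leq(1+o(1))M/e^{1-1/k}$ in the NO case with high probability. Combined with $g(M)=M$ in the YES case, the gap is at least $e^{1-1/k}(1-o(1))$, so any approximation strictly better than $e^{1-1/k}$ distinguishes the two cases.

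I expect the main obstacle to be this simultaneous concentration step: $M$ must be chosen large enough to absorb the $m^{k+1}$ union bound, yet the non-monotonicity forces us to rule out small sub-collections as well. This in turn depends on the elementary but essential observation that $t(1-1/k)^{t-1}$ is non-decreasing on $[1,k]$, so that no sub-collection of intermediate size $t<k$ can beat the $t=k$ threshold of $(1-1/k)^{k-1}$. Once that maximization is in place, the rest (Chernoff plus simulation of the streaming algorithm by the $k$ players) is standard.
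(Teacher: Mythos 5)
Your proposal follows essentially the same route as the paper: the same reduction from $\Disj(m,k)$, the same public-randomness construction of uniformly random $k$-part partitions (one per item $j\in[m]$, with player $i$ streaming part $i$ of partition $j$ when $j\in S_i$), the same observation that non-monotonicity of unique coverage forces a union bound over all sub-collections of size $t\le k$, and the same concentration-plus-union-bound finish. One justification in your write-up is backwards, however: the critical point of $h(t)=t(1-1/k)^{t-1}$ is $t^*=-1/\ln(1-1/k)\le k$ (since $\ln(1-1/k)\le -1/k$ gives $-1/\ln(1-1/k)\le k$), not $\ge k$; in fact $t^*\in(k-1,k)$, so $h$ is \emph{not} non-decreasing on all of $[1,k]$. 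Your target inequality $\tfrac{t}{k}(1-1/k)^{t-1}\le(1-1/k)^{k-1}$ is nonetheless true for the integer values of $t$ that actually arise: writing $s=k-t$ it is equivalent to $1-s/k\le(1-1/k)^{s}$, which is Bernoulli's inequality (and one checks the maximum over integers is attained at both $t=k-1$ and $t=k$). The paper sidesteps this by first bounding the exact probability by $\tfrac{\ell}{k}e^{-(\ell-1)/k}$, whose derivative $\tfrac{e^{(1-\ell)/k}(k-\ell)}{k^2}$ really is nonnegative on $[1,k]$, so the smoothed bound is genuinely maximized at $\ell=k$. With that one fix your argument is complete and matches the paper's.
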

 \begin{proof}
For each $i \in [m]$, let $\mathcal{P}_i$ be a random partition of $[n]$ into $k$ sets $V^i_1,\ldots,V^i_k$ such that an element in the universe $U = [n]$ belongs to exactly one of these sets uniformly at random. In particular, for all $i \in [m]$ and $v \in U$, 
 \[
 \prob{v \in V^i_j \land (\forall j' \neq j, v \notin V^i_{j'} )} = 1/k~.
 \]
 
 The partitions are chosen independently  using public randomness before receiving the input. For each player $j$, if $i \in S_j$, then they put $V^i_j$ in the stream. Note that the stream consists of $\Theta(m)$ sets. 

If the input is a NO instance, then for each $i \in [m]$, there is at most one set $V^i_j$ in the stream. Therefore, for each element $v \in [n]$ and any collection of $\ell \leq k$ sets $V^{i_1}_{j_1},\ldots, V^{i_\ell}_{j_\ell}$ in the stream,
\begin{align*}
\prob{v \text{ is uniquely covered by }  V^{i_1}_{j_1}, \ldots,V^{i_\ell}_{j_\ell}} & = \ell/k \cdot (1-1/k)^{\ell-1} \leq \ell/k \cdot e^{-(\ell-1)/k}~.
\end{align*}
Therefore, in expectation, $\mu_\ell := \expec{g(\{ V^{i_1}_{j_1}, \ldots,V^{i_\ell}_{j_\ell}\})} \leq \ell/k \cdot e^{-(\ell-1)/k} n$. By an application of Hoeffding's inequality, 
\begin{align*}
\prob{ g ( \{ V^{i_1}_{j_1} \cup \ldots \cup V^{i_\ell}_{j_\ell} \})  > \mu_\ell +  \eps e^{-(k-1)/k} \cdot n  }  & \leq   \exp{ -2 \epsilon^2 e^{-2(\ell-1)/k}n}  \\
& \leq  \exp{-\Omega(\epsilon^2  n)} \leq  \frac{1}{m^{10k}} ~.
\end{align*}

The last inequality follows by letting $n = \Omega(\epsilon^{-2} k \log m)$. The following claim shows that for large $k$, in expectation, picking $k$ sets is optimal in terms of unique coverage.
\begin{lemma}
The function $g(\ell) =  \ell/k \cdot e^{-(\ell-1)/k} n $  is increasing in the interval $ (-\infty,k]$ and decreasing in the interval $[k,+\infty)$. 
\end{lemma}
\begin{proof}
We take the partial derivative of $g$ with respect to $\ell$
\[
\frac{\partial g }{\partial \ell }  = \frac{e^{(1-\ell)/k} (k-\ell)}{ k^2} \cdot n 
\]
and observe that it is non-negative if and only if $\ell \leq k$. 
\end{proof}

By appealing to the union bound over all ${m \choose 1}+\ldots+{m \choose {k-1}}+{m \choose k} \leq O(m^{k+1})$ possible collections $\ell \leq k$ sets, we deduce that with high probability, for all collections of $\ell \leq k$ sets $S_1,\ldots,S_\ell$,
\begin{align*}
g(\{ S_1,\ldots,S_\ell \})  \leq \mu_\ell +  \eps e^{-(k-1)/k} \cdot n & \leq  \ell/k \cdot e^{-(\ell-1)/k} n +  \eps e^{-(k-1)/k} \cdot n  \\
 & \leq (1+\eps) e^{-1+1/k} n~.
\end{align*}

If the input is a YES instance, then clearly, the maximum $k$-unique coverage is $n$. This is because  there exists $i$ such that $i \in S_1 \cap \ldots \cap S_k$ and  therefore $V^i_1,\ldots,V^i_k$ are in the stream and these sets uniquely cover all elements.

Therefore, any constant pass  algorithm that returns better than  a $e^{1-1/k}/(1+\epsilon)$ approximation to $\uc$  for some large enough constant success probability implies a protocol to solve $\Disj(m,k)$. Thus, $\Omega(m/k^2)$ space is required. 
\end{proof}

\subsection{Lower bound for $1+\epsilon$ approximation} 
Assadi \cite{Assadi17} presents a $\Omega(m/\epsilon^2)$ lower bound for the space required to compute a $1+\epsilon$ approximation for $\mc$ when $k = 2$, even when the stream is in a random order and the algorithm is permitted constant passes.  This is proved via a reduction to multiple instances of the Gap-Hamming Distance problem on a hard input distribution, where an input with high maximum coverage corresponds to a YES answer for some Gap-Hamming Distance instance, and a low maximum coverage corresponds to a NO answer for all GHD instances.  This hard distribution has the additional property that high maximum coverage inputs also have high maximum unique coverage, and low maximum coverage inputs have low maximum unique coverage.  Therefore, the following corollary holds:

\begin{corollary}
Any constant-pass randomized algorithm with an approximation factor $1+\epsilon$  for $\uc$ requires $\Omega(m/\epsilon^2)$ space.
\end{corollary}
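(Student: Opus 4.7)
The plan is to invoke the lower bound construction of Assadi~\cite{Assadi17} as a black box and argue that the same hard distribution, with the same YES/NO inputs, already separates the two regimes of \uc up to a $1+\eps$ factor. I would first recall the structure of that reduction: Assadi takes several independent instances of Gap-Hamming Distance and embeds them into a single \mc instance with $k=2$ by designing two ``template'' sets so that the coverage by the optimal pair of sets either exceeds a high threshold $H$ (in at least one YES instance) or lies below a low threshold $L$ (in all NO instances), with $H/L \geq 1+\Omega(\eps)$. The resulting bound $\Omega(m/\eps^2)$ holds for constant-pass randomized protocols.

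The main step is to verify that the same cut through the distribution also separates $g(\cdot)$, the unique coverage of the best $k=2$ sets. For $k=2$, if the two chosen sets are $S_1,S_2$, then $|\psi(\{S_1,S_2\})|=|S_1|+|S_2|-|S_1\cap S_2|$ while $|\tilde\psi(\{S_1,S_2\})|=|S_1|+|S_2|-2|S_1\cap S_2|$. Thus the two quantities differ by exactly $|S_1\cap S_2|$. I would inspect the hard distribution and show that in the YES case the optimal pair has intersection negligible compared to $H$, so $g\geq H - o(H)$, while in the NO case every pair has unique-coverage bounded by $2L$ (trivially $g\leq \psi\leq L$ actually, since $g\le f$). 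Hence the same threshold gap $H$ versus $L$ persists, up to constant factors absorbed into $\eps$. Because a $1+\eps$ approximation to $g$ must distinguish these two cases, the same communication-complexity lower bound applies.

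The hard part, and the only non-mechanical step, will be confirming that the ``high coverage'' inputs in Assadi's construction really do have the pairwise-intersection structure needed for $g$ to stay close to $f$. If the template sets are designed by concatenating disjoint GHD gadgets (as in the standard construction), pairs of input sets intersect only within a single gadget, so $|S_1\cap S_2|$ is at most a small fraction of $H$; this should be straightforward to check from the published construction. Once this combinatorial fact is in hand, the reduction from multi-instance GHD goes through verbatim with $f$ replaced by $g$, yielding the claimed $\Omega(m/\eps^2)$ space lower bound for any constant-pass randomized algorithm that approximates \uc within $1+\eps$.
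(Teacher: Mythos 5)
Your proposal is correct and follows essentially the same route as the paper: the paper's proof likewise invokes Assadi's multi-instance Gap-Hamming Distance construction as a black box and observes that its hard distribution already separates unique coverage (high maximum coverage inputs have high maximum unique coverage, and low ones have low unique coverage). Your additional $k=2$ bookkeeping via $|\tilde\psi(\{S_1,S_2\})|=|S_1|+|S_2|-2|S_1\cap S_2|$ and the check that the optimal YES-pair has negligible intersection is exactly the combinatorial fact the paper asserts without spelling out.
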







\section{Handling Insert-Delete Streams}\label{sec:deletion-streams}

\subsection{Proof of Theorem \ref{theorem:deletion-exact}} \label{sec:deletion-streams-exact}
Consider coloring the elements of a universe with a $2$-wise hash-function such that each element is equally likely to get one of $c=10d^2 k$ colors. 

We say a set has color $P$ if the colors of its elements are all different and form the set $P$. Then, via $\ell_0$ sampling \cite{JowhariST11}, use $\tilde{O}(c^d)$ space to sample a set (if one exists) that is colored $P$ (i.e., for each color in $P$ there is exactly one element in the sampled set with this color) for each subset $P\subseteq \{1,2,\ldots,c\}$ of size at most $d$. 

\begin{definition}
Let $C$ be a collection of at most $k$ sets where each set have size at most $d$. Say a set $S$ in $C$ is \emph{good} with respect to $C$ if the elements of $S$ receive different colors and they are all different from the colors received by elements in $(\cup_{S' \in C} S')\setminus S$. 
\end{definition}

For any good set $S$ in the collection, let  $r(S)$ be the set found by the sampling algorithm that is colored the same as set $S$. We call $r(S)$ the \emph{replacement} for $S$.

\begin{lemma}\label{lem:rep}
Removing  sets $S_1, S_2,\ldots, S_g$ that are good with respect to (w.r.t.) $C$ from $C$ and replacing them by $r(S_1), r(S_2),\ldots, r(S_g)$ yields a new collection that (uniquely) covers at least the same number of elements as $C$.
\end{lemma}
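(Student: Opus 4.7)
The plan is to reduce both parts to a single structural observation: the goodness hypothesis forces disjointness not just at the level of colors, but at the level of elements, which in turn makes coverage and unique coverage straightforward to compute as sums over disjoint pieces.

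First I would unpack the definition. For each $i\leq g$, let $P_i\subseteq[c]$ denote the color pattern of $S_i$, so $|P_i|=|S_i|$. The goodness of $S_i$ w.r.t.\ $C$ says every element of $(\bigcup_{S\in C}S)\setminus S_i$ has color outside $P_i$. Specialising this to the elements of $S_j$ (for $j\neq i$) and to $E:=\bigcup_{S\in C\setminus\{S_1,\dots,S_g\}}S$ gives two consequences: the patterns $P_1,\dots,P_g$ are pairwise disjoint, and every element of $E$ has color outside $\bigcup_i P_i$. These upgrade to element-disjointness: the $S_i$'s are pairwise disjoint and each is disjoint from $E$. By the definition of $r(\cdot)$, each $r(S_i)$ has the same color pattern $P_i$, so $|r(S_i)|=|S_i|$, and the identical argument applied to the new collection $C':=(C\setminus\{S_1,\dots,S_g\})\cup\{r(S_1),\dots,r(S_g)\}$ shows that the $r(S_i)$'s are pairwise disjoint and each disjoint from $E$. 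Hence $\bigcup C$ and $\bigcup C'$ decompose as $E\sqcup\bigsqcup_i S_i$ and $E\sqcup\bigsqcup_i r(S_i)$ respectively.

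Given this decomposition the accounting is routine. For \mc, summing sizes yields equal total coverage because $|S_i|=|r(S_i)|$. For \uc, every element of $S_i$ (resp.\ $r(S_i)$) is uniquely covered in $C$ (resp.\ $C'$) since it lies in no other set, contributing exactly $|S_i|=|r(S_i)|$ uniquely covered elements; the uniquely covered elements coming from $E$ depend only on the common sub-collection $C\setminus\{S_1,\dots,S_g\}=C'\setminus\{r(S_1),\dots,r(S_g)\}$, so these contributions match as well. Thus both quantities are in fact preserved exactly, which is stronger than the ``at least'' statement the lemma requires. The only subtlety to watch is that the goodness hypothesis must be used \emph{simultaneously} for all the $S_i$'s in order to obtain mutual disjointness of the $P_i$'s; once this is in hand, there is no need for induction on $g$ or for a more delicate argument about possible interactions between the different replacements, and the conditioning on the $\ell_0$ samplers succeeding for each relevant color pattern is a standard union bound handled outside this lemma.
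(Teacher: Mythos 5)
Your argument hinges on the claim that goodness forces disjointness at the level of elements: that $S_1,\dots,S_g$ are pairwise disjoint and disjoint from $E=\bigcup_{S\in C\setminus\{S_1,\dots,S_g\}}S$, so that $\bigcup_{S\in C}S$ decomposes as $E\sqcup\bigsqcup_i S_i$. That claim is false, and the step where you derive it from the definition is exactly where the proof breaks. Goodness of $S_i$ only constrains the colors of elements of $\left(\bigcup_{S\in C}S\right)\setminus S_i$; an element $x\in S_i\cap S_j$ belongs to $S_i$ and is therefore \emph{excluded} from that comparison set, so nothing in the definition forbids such an intersection. Concretely, take $C=\{S_1,S_2\}$ with $S_1=\{a,b\}$, $S_2=\{a,c\}$ and colors $a\mapsto 1$, $b\mapsto 2$, $c\mapsto 3$. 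Both sets are good w.r.t.\ $C$, yet they intersect, their patterns $\{1,2\}$ and $\{1,3\}$ are not disjoint, and $a$ is not uniquely covered. Your decomposition, the pairwise disjointness of the $P_i$, and the assertion that every element of $S_i$ is uniquely covered in $C$ all fail here, and so does your stronger conclusion of exact preservation: if $r(S_1)$ and $r(S_2)$ happen to be element-disjoint sets with these patterns, the replacement raises coverage from $3$ to $4$ and unique coverage from $2$ to $4$ --- consistent with the lemma's one-sided inequality, but not with equality.

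The repair is to argue at the level of colors rather than elements, which is what the paper's proof does. Let $R_0$ be the set of colors appearing on $\bigcup_i S_i$ and $R_1$ the set of colors on the remaining covered elements. Goodness does give $R_0\cap R_1=\emptyset$ and that each color of $R_0$ is carried by exactly one element of $\bigcup_i S_i$, so $|R_0|=|\bigcup_i S_i|$. The replacement leaves the multiplicity of every element whose color lies in $R_1$ unchanged; for each color $p\in R_0$ it guarantees at least one element of color $p$ afterwards, and that element has multiplicity $1$ afterwards whenever the unique original element of color $p$ had multiplicity $1$. Counting covered (respectively, uniquely covered) elements color class by color class then yields the claimed one-sided bound without any disjointness assumption on the original sets.
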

\begin{proof}
Let $R_0$ be the set of colors used to color elements in $\cup_{i=1}^g S_i$ and let $R_1$ be the set of colors used to color elements in $(\cup_{S' \in C} S')\setminus \left( \cup_{i=1}^g S_i\right )$. Because  $S_1, S_2,\ldots, S_g$ are good sets, $|R_0|=|\cup_{i=1}^g S_i|$ and $R_0\cap R_1=\emptyset$. After replacing $S_1, S_2,\ldots, S_g$ by $r(S_1), r(S_2),\ldots$, the multiplicity of an element with a color in $ R_1$ is unchanged. For any color in $R_0$, let $e$ be the element in $\cup_{i=1}^g S_i$ with this color. There will be at least one element with the same color as $e$ after the collection is transformed. It follows that the coverage of the collection does not decrease: the removal of $S_1, S_2,\ldots, S_g$ reduces the coverage by at most $|\cup_{i=1}^g S_i|$ but adding $r(S_1), r(S_2),\ldots$  increases the coverage by at least $|R_0|$. To argue that the unique coverage of the collection does not decrease, note that if $e$ had multiplicity 1 then the element with the same color as $e$ after the transformation also has multiplicity 1.

\end{proof}

\begin{lemma}\label{lem:prob}
For any $C'\subseteq C$, $\Pr[\mbox{number of good sets in $C'$ is $\geq 4|C'|/5$}] \geq 1/2$.
\end{lemma}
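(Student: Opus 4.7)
The plan is to analyze, for each $S \in C'$, the probability that $S$ fails to be good with respect to $C$, then apply linearity of expectation and Markov's inequality to control the total number of bad sets.

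First I would unpack the definition: $S \in C'$ is bad precisely when at least one ``collision'' happens under the hash-coloring, namely either (i) two distinct elements of $S$ receive the same color, or (ii) some element of $S$ and some element of $(\bigcup_{S'\in C} S') \setminus S$ receive the same color. The number of element pairs that can witness (i) is at most $\binom{|S|}{2} \le \binom{d}{2}$, and the number of pairs witnessing (ii) is at most $|S| \cdot |(\bigcup_{S'\in C} S')\setminus S| \le d \cdot (k-1)d$. Since $|C| \le k$, the total number of such ``bad pairs'' is at most
\[
\binom{d}{2} + d(k-1)d \;=\; \tfrac{d(d-1)}{2} + (k-1)d^2 \;\le\; d^2 k.
\]
By $2$-wise independence of the hash function, each fixed pair of elements collides in color with probability exactly $1/c = 1/(10d^2k)$. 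A union bound then gives $\Pr[S \text{ is bad}] \le d^2k/(10d^2k) = 1/10$.

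Next, let $X = \sum_{S \in C'} \mathbf{1}[S \text{ is bad w.r.t. } C]$. By linearity of expectation, $\mathbb{E}[X] \le |C'|/10$. Markov's inequality yields
\[
\Pr\!\left[X \ge \tfrac{|C'|}{5}\right] \;\le\; \frac{\mathbb{E}[X]}{|C'|/5} \;\le\; \frac{1}{2}.
\]
Since $X$ is integer valued, on the complementary event $X < |C'|/5$ we have $X \le \lfloor |C'|/5 \rfloor$, and hence the number of good sets in $C'$ is $|C'| - X \ge |C'| - \lfloor |C'|/5 \rfloor \ge 4|C'|/5$. This gives the claimed probability bound.

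The only slightly delicate point is the pair-counting for step (i)+(ii), where one must be careful not to over-count (a single color collision can potentially ``spoil'' two different sets in $C'$, but for the union bound per $S$ that is harmless, and linearity of expectation handles it cleanly). The rest is arithmetic tuned so that the constant $10$ in $c = 10d^2k$ is exactly what is needed to drop the per-set bad probability below $1/10$, which is what Markov requires to reach the threshold $4|C'|/5$ with probability at least $1/2$.
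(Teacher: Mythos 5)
Your proof is correct and follows essentially the same route as the paper: a union bound over at most $d^2k$ potentially colliding element pairs (each colliding with probability $1/c = 1/(10d^2k)$) gives $\Pr[S\text{ is bad}]\le 1/10$, then linearity of expectation and Markov's inequality finish the argument. Your pair count is slightly more careful than the paper's crude $d\cdot dk$ bound, but both yield the same constants.
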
 
\begin{proof}
First note that, a set is not good if one of its element shares a color with an element in that set or in another set in the collection. By the union bound,
\[\Pr[\mbox{set is not good}] \leq d(dk)/c=1/10 \ . \]
Hence, for any subset $C'$ of $C$, $\expec{\mbox{number of bad sets in $C'$}} \leq  |C'|/10$
and the lemma follows via Markov inequality.
\end{proof}

\begin{theorem}After repeating the random coloring and sampling $O(\log k)$ times, we have a collection of sets that includes the collection of size at most $k$ that (uniquely) covers the maximum number of elements.
\end{theorem}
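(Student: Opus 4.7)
Let $C^* = \{S_1^*, \ldots, S_k^*\}$ denote some fixed optimal solution of size at most $k$, so $f(C^*) = \opt$ (or $g(C^*) = \opt$ in the unique-coverage case). The plan is to construct inductively, across the $T = O(\log k)$ rounds of coloring-and-sampling, a random sequence of ``proxy'' collections $C_0, C_1, \ldots, C_T$, each of size at most $k$, each with (unique) coverage at least $\opt$, and such that $C_T$ is entirely contained in the union of sets sampled over the $T$ rounds. Since the algorithm stores every sampled set, this immediately gives the theorem.

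Set $C_0 := C^*$ and let $U_t := C_t \cap C^*$ denote the ``unreplaced'' sets remaining from the original optimum. In round $t+1$, I apply Lemma~\ref{lem:prob} with the outer collection instantiated as $C := C_t$ and the inner subset as $C' := U_t$: since $|C_t| \leq k$, the lemma guarantees that with probability at least $1/2$ over the fresh round-$(t+1)$ coloring, at least $4|U_t|/5$ sets of $U_t$ are good with respect to $C_t$. Those good sets are then replaced by the corresponding $\ell_0$-sampled representatives $r(\cdot)$ drawn in round $t+1$ to form $C_{t+1}$. By Lemma~\ref{lem:rep}, this replacement does not decrease coverage or unique coverage, so $f(C_{t+1}) \geq f(C_t)$ and $g(C_{t+1}) \geq g(C_t)$, and the invariant is preserved. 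Crucially, the round-$(t+1)$ coloring is independent of the (random) collection $C_t$ constructed thus far, so conditioning on $C_t$ is harmless.

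The quantity $|U_t|$ therefore shrinks by a factor of at least $1/5$ in each ``successful'' round, and rounds succeed independently with probability at least $1/2$. A standard multiplicative Chernoff bound on the number of successful rounds shows that $T = O(\log k)$ rounds (with a constant absorbing the desired high-probability target, or, if a failure probability inverse-polynomial in $m$ is required, absorbing an extra $\log m$ factor) suffice to drive $|U_T|$ to $0$ with high probability. Once $U_T = \emptyset$, every set in $C_T$ was produced by some round's $\ell_0$-sampler and is therefore stored by the algorithm, while $|C_T| \leq k$ and $f(C_T) \geq \opt$ (resp.\ $g(C_T) \geq \opt$), proving the theorem.

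The main technical subtlety is that $C_t$ is random and depends on prior rounds, which would ordinarily complicate applying a lemma phrased for a fixed collection; this is resolved by the independence and freshness of colorings across rounds together with the observation that Lemma~\ref{lem:prob}'s conclusion uses only $|C| \leq k$ and no further structure of $C$, so conditioning on $C_t$ leaves the lemma applicable verbatim. A minor bookkeeping point is that the induction yields the claim for both $\mc$ and $\uc$ simultaneously, because Lemma~\ref{lem:rep} explicitly treats both coverage notions with essentially the same argument.
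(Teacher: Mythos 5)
Your proposal is correct and follows essentially the same argument as the paper: your $U_t$ is exactly the paper's shrinking set $C'$ of not-yet-replaced optimal sets, and you invoke Lemma~\ref{lem:rep} and Lemma~\ref{lem:prob} in the same way to show the coverage invariant is preserved while $|U_t|$ drops by a constant factor per successful round. Your explicit handling of the conditioning issue (fresh, independent colorings applied to the random collection $C_t$) and the Chernoff bound over successful rounds are welcome elaborations of steps the paper leaves implicit, but they do not constitute a different route.
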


\begin{proof}
For the sake of analysis, let $C_0$ be a collection of at most $k$ sets with optimum (unique) coverage. Let $C'=C_0$. 
\begin{enumerate}
\item Randomly color elements. Let $C_1$ be the collection formed from $C_0$ by replacing all sets in $C_0$ that are good sets wrt $C_0$ by their replacements. Remove all good sets (w.r.t. $C_0)$ from $C'$.
\item Randomly color elements. Let $C_2$ be the collection formed from $C_1$ by replacing all sets in $C'$ that are good sets wrt $C_1$ by their replacements.  Remove all good sets (w.r.t. $C_1)$ from $C'$.
\item \ldots continue in this way for $O(\log k)$ steps.
\end{enumerate}
In each step, the size of $|C'|$ decreases by a constant factor  with constant probability by appealing to Lemma \ref{lem:prob}. Hence after $O(\log k)$ steps $|C'|=0$. Note that the (unique) coverage of $C_{O(\log k)}$ is at least the (unique) coverage of $C_0$ by Lemma \ref{lem:rep}. 
\end{proof}
Noting that the $O(\log k)$ colorings/sampling can be performed in parallel, we have a single-pass algorithm.

\subsection{Handling deletions for the algorithm in  Theorem \ref{thm:alg6}} \label{sec:deletion-streams-2}
We now explain how the approach using in Theorem \ref{thm:alg6} can be extended to the case where sets may be inserted and deleted. In this setting, it is not immediately obvious how to select the largest $\roundup{r k/\epsilon}$ sets; the approach used when sets are only inserted does not extend. Note that in this model we can set $m$ to be the maximum number of sets that have been inserted and not deleted at any prefix of the stream rather than the total number of sets inserted/deleted.

However, we can extend the result as follows. Suppose the sketch of a set for approximating maximum (unique) coverage requires $B$ bits; recall from Section \ref{sec:sketch} that  $B=k\epsilon^{-2}\polylog(n,m)$ suffices. We can  encode such a sketch of a set $S$ as an integer $i(S)\in [2^B]$. Suppose we know that exactly $\roundup{r k/\epsilon}$ sets have size at least some threshold $t$. We will remove this assumption shortly. Consider the vector $x\in [N]$ where $N=2^B$ that is initially 0 and then is updated by a stream of set insertions/deletions as follows:
\begin{enumerate}
\item When $S$ is inserted, if $|S|\geq t$, then $x_{i(S)}\leftarrow x_{i(S)}+1$.
\item When $S$ is deleted, if $|S|\geq t$, then $x_{i(S)}\leftarrow x_{i(S)}-1$.
\end{enumerate}
At the end of this process $x\in \{0,1, \ldots, ,m\}^{2^B}$, $\ell_1(x)=\roundup{r k/\epsilon}$, and reconstruct the sketches of largest $\eta k$ sets given $x$. Unfortunately, storing $x$ explicitly in small space is not possible since, while we are promised that at the end of the stream $\ell_1(x)=\roundup{r k/\epsilon}$, during the stream it could be that $x$ is an arbitrary binary string with $m$ one's and this requires $\Omega(m)$ memory to store. To get around this, it is sufficient to maintain a linear sketch of $x$ itself that support sparse recovery. For our purposes,  the CountMin Sketch \cite{CormodeM05} is sufficient although other approaches are possible. The CountMin Sketch allows $x$ to be reconstructed with probability $1-\delta$ using a sketch of size 
\[O(\log N+\roundup{r k/\epsilon} \log(\roundup{r k/\epsilon} /\delta)\log m)=O(\roundup{r k/\epsilon} \epsilon^{-2}\polylog(n,m) ) \ .\]

To remove the assumption that we do not know $t$ in advance, we consider values:
\[t_0, t_1, \ldots , t_{\lceil \log_{ 1+\epsilon} m\rceil } \mbox{ where } t_i=(1+\epsilon)^i \ .\] We define vector $x^0, x^1, \ldots \in \{0,1, \ldots, ,m\}^{2^B}$ where $x^i$ is only updated when a set of size $\leq t_i$ but $>t_{i-1}$ is inserted/deleted. Then there exists $i$ such that $\leq \roundup{r k/\epsilon}$ sets have size $\leq t_{i-1}$ and the sketches of these sets can be reconstructed from $x^0, \ldots, x^{t_{i-1}}$. To ensure we have $\roundup{r k/\epsilon}$ sets, we may need some additional sketches corresponding to sets of size $>t_{i-1}$ and $\leq t_i$ but unfortunately there could be $m$ such sets and we are only guaranteed recovery of $x^{t_i}$ when it is sparse. However, if this is indeed the case we can still recover enough entries of $x^{t_1}$ by first subsampling the entries at the appropriate rate (we can guess sampling rate  $1, 1/2, 1/2^2, \ldots 1/m$) in the standard way. Note that we can keep track of $\ell_1(x^i)$ exactly for each $i$ using $O(\log m)$ space.



\section{The Subsampling Framework} \label{appendix:subsampling}

Assuming we have $v$ such that $\opt/2 \leq v \leq \opt$. Let $h:[n] \rightarrow \{0,1\}$ be a hash function that is $\Omega(\epsilon^{-2} k \log m)$-wise independent. We run our algorithm on the subsampled universe $U' = \{ u \in U: h(u)=1\}$. Furthermore, let

\[
\prob{h(u)=1} = p = \frac{c k \log m}{\epsilon^{2} v}
\] 
where $c$ is some sufficiently large constant.  Let $S' = S \cap U'$ and let $\opt'$ be the optimal unique coverage value in the subsampled set system. The following result is from McGregor and Vu \cite{MV18}. We note that the proof is the same except that the indicator variables now correspond to the events that an element being uniquely covered (instead of being covered).
\begin{lemma} \label{lem:sampling}
With probability at least $1-1/\poly(m)$, we have that
\[
p \opt (1+\eps) \geq \opt' \geq p \opt (1-\epsilon)
\]
Furthermore, if $S_1,\ldots,S_k$ satisfies $g(\{ S'_1,\ldots,S'_k \}) \geq p\opt (1-\epsilon)/t $ then 
\[
g(\{S_1,\ldots,S_k\}) \geq \opt (1/t -2 \epsilon) ~.
\]
\end{lemma}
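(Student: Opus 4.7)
The plan is to mimic the analogous argument for \mc in \cite{MV18}, with the only change being that we track elements that are \emph{uniquely} covered rather than just covered. The crucial structural observation is that element subsampling commutes with unique coverage: an element $u$ lies in $\tilde\psi(C')$ for some sub-collection $C'\subseteq\{S_1,\dots,S_k\}$ in the original instance iff, after restriction to $U'=\{u:h(u)=1\}$, it still lies in $\tilde\psi(\{S\cap U':S\in C'\})$, \emph{provided} $h(u)=1$. Consequently
\[
|\tilde\psi(\{S\cap U':S\in C'\})| \;=\; \sum_{u\in\tilde\psi(C')}\mathbb{1}[h(u)=1],
\]
which is a sum of subsampling indicators with expectation $p\,|\tilde\psi(C')|$.

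First I would fix an arbitrary $k$-subcollection $C=\{S_1,\dots,S_k\}\subseteq M$ and one of its $2^k$ sub-collections $C'$. Using a Chernoff-type tail bound for $\Omega(\eps^{-2}k\log m)$-wise independent Bernoullis (e.g.\ Schmidt--Siegel--Srinivasan), I would show that the sum above deviates from its mean by more than $\eps\,p\,\opt$ with probability at most $m^{-\Omega(k)}$; this is valid because $p\,\opt=\Omega(k\log m/\eps^{2})$ by the assumption $v\geq\opt/2$ and by choosing the constant $c$ in $p=ck\log m/(\eps^{2}v)$ large enough. Next I would union-bound over all $\binom{m}{k}\cdot 2^{k}\leq m^{O(k)}$ pairs $(C,C')$, yielding simultaneous additive error $\eps\,p\,\opt$ for every such pair with probability $1-1/\poly(m)$.

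Then I would derive both parts of the lemma from this global concentration. Taking the maximum over $C'\subseteq C$ gives, for every fixed $k$-collection $C$, that $g(\{S\cap U':S\in C\})$ lies within $\eps\,p\,\opt$ of $p\cdot g(C)$. Applying this to the $C$ that achieves $g(C)=\opt$ yields $\opt'\geq p\opt(1-\eps)$; applying it to a $C\subseteq M$ whose subsampled image achieves $\opt'$ yields $\opt'\leq p\opt(1+\eps)$. For the second claim, if $g(\{S_1',\dots,S_k'\})\geq p\opt(1-\eps)/t$ then the additive concentration gives $p\cdot g(\{S_1,\dots,S_k\})\geq p\opt(1-\eps)/t-\eps\,p\,\opt$; dividing by $p$ and bounding $(1-\eps)/t\geq 1/t-\eps$ yields $g(\{S_1,\dots,S_k\})\geq\opt(1/t-2\eps)$.

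The main obstacle is calibrating the level of bounded independence against both the additive deviation and the $m^{O(k)}$-sized union bound: with full independence a vanilla Chernoff bound would suffice, but under only $\Omega(\eps^{-2}k\log m)$-wise independence one must invoke the stronger tail inequalities for bounded-independence sums, whose conclusions require the mean to be $\Omega(k\log m/\eps^{2})$. This is exactly why $p$ is calibrated against an estimate $v\in[\opt/2,\opt]$: it makes the additive slack $\eps\,p\,\opt$ small enough to appear as a \emph{multiplicative} $(1\pm\eps)$ factor on $\opt'$, while simultaneously keeping the mean large enough for the tail bound to fire. A secondary subtlety absent in the \mc analysis is that the optimum for \uc need not use all $k$ sets, which is why the union bound must range over sub-collections of each $C$ as well; this costs only a benign $2^k$ factor.
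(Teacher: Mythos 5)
Your proposal is correct and follows essentially the same route as the paper, which likewise reduces the claim to the McGregor--Vu argument with indicator variables now marking \emph{uniquely} covered elements, invokes the Schmidt--Siegel--Srinivasan tail bound for $\Omega(\eps^{-2}k\log m)$-wise independence to get additive deviation $\eps p\opt$, and union-bounds over the $m^{O(k)}$ relevant collections. One tiny slip: the lower bound $p\opt=\Omega(\eps^{-2}k\log m)$ needed for the tail bound to fire follows from $v\leq\opt$, not from $v\geq\opt/2$ (the latter is what caps $\opt'$ and hence the memory), but since both inequalities are assumed this does not affect the argument.
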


We could guess $v =1,2,4,\ldots,n$. One of the guesses must be between $\opt/2$ and $\opt$ which means $\opt' = O(\epsilon^{-2}k \log m)$. Furthermore, if we find a $1/t$ approximation on the subsampled universe, then that corresponds to a $1/t-2\epsilon$ approximation in the original universe. We note that as long as $v \leq \opt$ and $h$ is $\Omega(\epsilon^{-2}k \log m)$-wise independent,  we have (see \cite{SchmidtSS95}, Theorem 5):
\begin{align*}
& \prob{g(\{S_1',\ldots,S_\ell'\} ) = p  \cdot g(\{S_1,\ldots,S_\ell \} ) \pm \epsilon p \opt } \\
 & \geq  1 - \exp{-\Omega( k \log m)} \geq  1-1/m^{\Omega(k)}~.
\end{align*}
 This gives us Lemma \ref{lem:sampling} even for when $v < \opt/2$. However, if $v \leq \opt/2$, then $\opt'$ may be larger than $O(\epsilon^{-2} k \log m)$, and we may use too much memory. To this end, we simply terminate those instantiations. Among the instantiations that are not terminated, we return the solution given by the smallest guess.

\bibliography{references}

\end{document}